\newtheorem{theorem}{Theorem}
\newtheorem{proposition}{Proposition}
\newtheorem{assumption}{Assumption}
\newtheorem{lemma}{Lemma}
\newtheorem{cor}{Corollary}
\newtheorem{remark}{Remark}
\newtheorem*{example}{Example}
\newtheorem{definition}{Definition}
\newcommand{\E}{\mathbb{E}}
\newcommand{\Ep}[1]{\E\big[#1\big]}
\newcommand{\R}{\mathbb R}
\newcommand{\N}{\mathbb N}
\newcommand{\Gcal}{\mathcal G}
\newcommand{\Vcal}{\mathcal V}
\newcommand{\Ecal}{\mathcal E}
\newcommand{\Fab}{\mathcal F_{\alpha,\beta}}
\newcommand{\Sab}{\mathcal S_{a,b}}
\newcommand{\Sao}{\mathcal S_{a,0}}
\newcommand{\Qbf}{\mathbf Q}
\newcommand{\Lpbf}{\mathbf L_p}
\newcommand{\invL}{{L^{\dagger}}}
\newcommand{\prt}[1]{\left(#1\right)}
\newcommand{\brk}[1]{\left[#1\right]}
\newcommand{\brc}[1]{\left\{#1\right\}}
\newcommand{\norm}[1]{\lVert#1\rVert}
\newcommand{\LdagNorm}[1]{\lVert#1\rVert_{\Lpbf^\dag}}
\newcommand{\inProd}[2]{\langle#1,#2\rangle}
\DeclareMathOperator*{\argmin}{arg\,min}
\newcommand{\optx}[1]{x^{\ast,#1}}
\newcommand{\amax}{a_{+}}
\newcommand{\amin}{a_{-}}
\title{
Random Coordinate Descent for Resource Allocation in Open Multi-Agent Systems
}
\author{Charles Monnoyer de Galland, Renato Vizuete, Julien M. Hendrickx, Elena Panteley, and Paolo Frasca 
\thanks{Research supported by the “RevealFlight” ARC at UCLouvain, by the \textit{Incentive Grant for Scientific Research (MIS)} \quotes{Learning from Pairwise Data} of the F.R.S.-FNRS and in part by the Agence Nationale de la Recherche (ANR) via grant “Hybrid And Networked Dynamical sYstems” (HANDY), number ANR-18-CE40-0010.}
\thanks{C.~Monnoyer de Galland and R.~Vizuete equally contributed to this work.   C.~Monnoyer de Galland, R.~Vizuete and J. M. Hendrickx are with the ICTEAM institute, UCLouvain, Louvain-la-Neuve, Belgium. R.~Vizuete was with Universit\'{e} Paris-Saclay, CNRS, CentraleSup\'{e}lec, Laboratoire des signaux et syst\`{e}mes, France, and C.~Monnoyer de Galland was a FRIA fellow (F.R.S.-FNRS) during the elaboration of this work. E.~Panteley is with Universit\'{e} Paris-Saclay, CNRS, CentraleSup\'{e}lec, Laboratoire des signaux et syst\`{e}mes, 91190, Gif-sur-Yvette, France. P.~Frasca is with Univ.\ Grenoble Alpes, CNRS, Inria, Grenoble INP, GIPSA-lab, F-38000 Grenoble, France. (E-mail adresses: ~charles.monnoyer@uclouvain.be; ~renato.vizueteharo@uclouvain.be; ~julien.hendrickx@uclouvain.be;
~elena.panteley@l2s.centralesupelec.fr;~paolo.frasca@gipsa-lab.fr).}}
\begin{document}
\maketitle

\begin{abstract}
We propose a method for analyzing the 
distributed random coordinate descent algorithm for solving separable resource allocation problems in the context of an open multi-agent system, where agents can be replaced during the process. 
In particular, we characterize the evolution of the distance to the minimizer in expectation by following a time-varying optimization approach which builds on two components.
First, we establish the linear convergence of the algorithm in closed systems, in terms of the estimate towards the minimizer, for general graphs and appropriate step-size. 
Second, we estimate the change of the optimal solution after a replacement, in order to evaluate its effect on the distance between the current estimate and the minimizer.
From these two elements, we derive stability conditions in open systems and establish the linear convergence of the algorithm towards a steady-state expected error. 
Our results enable to characterize the trade-off between speed of convergence and robustness to agent replacements, under the assumptions that local functions are smooth, strongly convex, and have their minimizers located in a given ball. 
The approach proposed in this paper can moreover be extended to other algorithms guaranteeing linear convergence in closed system.
\end{abstract}

\begin{IEEEkeywords}
Open multi-agent systems, distributed optimization, gradient methods, agents and autonomous systems.
\end{IEEEkeywords}

\section{Introduction}
\label{sec:intro}
Resource allocation is an important optimization problem, where a fixed amount of resources must be distributed among a specific number of activities or agents in an optimal way \cite{ibaraki1988resource,patriksson2008survey}.
In multi-agent systems, this problem is formulated as the minimization of an objective function $f$ that is separable in local costs $f_i:\R^d\to\R$ held by the agents, subject to an equality constraint on the weighted sum of the states $x_i\in\R^d$ with respect to the budget $b\in\R^d$.
The problem can then be written as
\begin{align}
    \label{eq:Statement:ResourceAllocProblem}
    &\min_{x\in\R^{nd}} f(x) = \sum_{i=1}^n f_i(x_i)&
    &\hbox{subject to}&
    &\sum_{i=1}^n a_ix_i=b,
\end{align}
where $a_i>0$ is the weight of agent $i$ to satisfying the constraint.
Each agent thus aims at minimizing its local cost while guaranteeing the budget, which requires a certain level of collaboration between them.

Problem \eqref{eq:Statement:ResourceAllocProblem} appears in different applications, including smart grids \cite{dai2021distributed}, power systems \cite{yi2016initialization}, actuator networks \cite{teixeira2013distributed}, and games \cite{liang2017distributed}.
Most of these existing studies assume that the composition of the multi-agent system remains fixed during the entire process.
Yet, with the growing size of systems nowadays, arrivals and departures of agents are expected to happen more frequently, giving rise to \emph{open multi-agent systems}, where agents join and leave the system at a time-scale similar to that of the studied process.
Consider, for instance, the case of distributed energy resources where a fixed amount of energy must be supplied by a network of devices \cite{dominguez2012decentralized}, and where some of the devices might experience failures with higher probability as the system size increases, or change their operating point due to environmental conditions.

In the framework of open systems, a fixed solution for \eqref{eq:Statement:ResourceAllocProblem} cannot be obtained as in general, the size is not fixed and the cost functions keep changing, such that the goal of the agents is to track the time-varying solution of \eqref{eq:Statement:ResourceAllocProblem} as well as possible at all times.
Moreover, as the size of such systems reaches large values, global optimization methods relying, \textit{e.g.}, on the computation of the whole gradient of $f$ are not suited since the computational complexity would be high and in some cases, it would not be practical to gather the whole gradient as agents may have entered/left in the meantime.
In fact, most of the algorithms used to solve  
\eqref{eq:Statement:ResourceAllocProblem} in a decentralized way are gradient-based, such as in \cite{xiao2006optimal}, where the authors use a weighted version of the well-known \emph{Gradient Descent} algorithm with an appropriate choice of weights to preserve the constraint.
Yet, this type of methods requires significant computational resources and, therefore, is not suitable for handling open systems. For this reason, it is important to consider optimization algorithms based on local interactions, since they are more flexible.
An alternative type of algorithms that allow to considerably reduce the computational complexity is the so-called \emph{Coordinate Descent} algorithm introduced by Nesterov, where the optimization is performed only along one direction at each iteration \cite{nesterov2012efficiency}. 
For multi-agent systems, the selection of one coordinate is equivalent to the choice of a particular edge of the network to perform the optimization. 
In such algorithms, the sequence of edges is crucial, and hence a randomized choice denoted as \emph{Random Coordinate Descent} algorithm (RCD) was studied in \cite{necoara2013random}, where convergence of the cost functions is proved under standard assumptions when only pairwise interactions are considered, so that the algorithm requires only the computation of a pair of local gradients per iteration. 

In general, interactions in a multi-agent system are characterized by an underlying network, so that agents can only communicate with a limited number of other neighbors determined by the communication network topology.
This feature of the network plays an important role in the analysis and design of algorithms, since the performance can be different in sparse and dense graphs \cite{cormen2009introduction}.

Furthermore, problems of the type of 
\eqref{eq:Statement:ResourceAllocProblem} often assume that the contributions of the agent to the constraint is homogeneous (\textit{i.e.}, that $a_i=1$ for each agent $i$).
However, this is not always the case, as for example in energy supply, where priority may be given to renewable energy sources while fossil-fuel plants ought to contribute less to the amount of energy required by each region \cite{bird2013integrating}. 

Our goal is to analyze the RCD algorithm applied to the most general possible version  
\eqref{eq:Statement:ResourceAllocProblem} (\textit{i.e.}, with arbitrary graph topologies and non-homogeneous contributions of the agents) in open systems.
In particular, we focus on systems subject to only replacements, and hence of fixed size, so that the main challenge to handle is the variations of local cost functions, such as \textit{e.g.}, in the context of energy distribution where such changes can be triggered by time-varying environmental conditions.

\subsection{Optimization in open multi-agent systems}
Algorithms for open multi-agent systems have recently been studied in several contexts. In the case of consensus, \cite{hendrickx2017open,monnoyer2022fundamental,de2020open} analyzed the behavior and performance of gossip interactions, \cite{franceschelli2020stability,dashti2019dynamic} studied dynamic consensus in terms of stability, and \cite{vizuete2020influence,varma2018open} focused on consensus with stochastic interactions.

Optimization problems in open multi-agent systems scenarios, characterized, among others, by time-varying objective functions, start getting attention as well.
In \cite{OpenDo:OpenDGDStability}, the behavior and the stability of gradient descent was studied in a setting where agents can be replaced.
In \cite{hsieh2021optimization}, an algorithm based on dual averaging was proposed to minimize a global cost function that depends on a time-varying set of active agents in a fixed size network.

Time-varying objective functions are also considered in an alternative field of work called \emph{online optimization} \cite{DO:online-varyingFunctions,shahrampour2017distributed}, where a common approach is to minimize, over a finite period of time $T$, the \emph{dynamic regret} defined as
\begin{align}\label{eq:regret_definition}
    Reg_T^d := \sum_{t=1}^T f^t(x^t)-\sum_{t=1}^Tf^t(x^{*,t}),
\end{align}
where $f^t$ is a sequence of cost functions, $x^t$ are the estimates and $x^{*,t}:=\argmin_x f^t(x)$ is the minimizer of the global function $f^t$ at time $t$.
The objective of online optimization is thus to determine the sequence of estimates $x^t$ that keeps $Reg_T^d$ as small as possible over the time period, under some assumptions about the possible sequences of time-varying cost functions.

Instead, in our problem, replacements of agents occur without any regularity and in this case, it is not possible to obtain a sublinear regret which is the usual objective of online optimization \cite{li2022survey}.
Furthermore, we can observe from \eqref{eq:regret_definition} that the computation of the regret implies an accumulation of errors from the time instant $t=1$, which does not seem appropriate for the case of open multi-agent systems, where the replacement of an agent implies that all the past information of the replaced agent is not longer available, since this agent left the system. For this reason, we perform the analysis of the resource allocation problem in open multi-agent systems considering a \emph{time-varying optimization} approach \cite{simonetto2020time}, where the objective of the algorithms is to be at all times as close as possible to the instantaneous minimizer. This objective is more suitable for open multi-agent systems where replacements may be infrequent and agents try to reach the best performance at all time instants without regarding a performance in a potential future horizon since even if the cost functions belong to the same class, there is no regularity in the way they change \cite{yang2016regularity,jadbabaie2015regularity,nali2023survey}. Nevertheless, even if replacements are not frequent, they can impact the performance of the algorithms since they modify the location of the global minimizer. 

The framework of time-varying optimization has been used in many works, including the resource allocation problem. Exact convergence can be obtained under restrictive assumptions like identical Hessian matrices \cite{wang2021distributed} or local cost functions with a time-independent part \cite{doostmohammadian20211st}, that generally guarantee the continuity of the time-varying minimizer. When exact convergence is not possible, the main challenge is the derivation of an upper bound for the error \cite{simonetto2020time}. This approach has been used, for instance, in the case of quadratic cost functions \cite{esteki2022distributed}.
However, changes of the cost functions due to replacements of agents without establishing further restrictions on the structure of the cost functions have never been explored, and this particular characteristic is the motivation of this work and plays an important role in the formulation of the problem and the derivation of an explicit bound for the error.

\subsection{Preliminary version and contribution}
We study the Random Coordinate Descent algorithm (RCD) to solve the resource allocation problem in an open system where agents get replaced during the process, relying on a decoupled analysis of the RCD algorithm and of replacements of single agents.
A preliminary version of this work was presented in \cite{monnoyer2021random}, 
where the problem was analyzed for homogeneous agents holding one-dimensional local cost functions, interacting in a fully connected network, and with uniform probabilities in the selection of edges for the updates. 
Moreover, replacements of single agents were studied by analyzing the case where possibly all agents can get replaced at once.

By contrast, in this work, we focus on heterogeneous agents holding $d$-dimensional local cost functions, interacting in a general graph topology.
We consider an arbitrary distribution for the probabilities associated to the choice of edges and we derive an upper bound for the convergence of the algorithm following a similar approach in terms of a norm induced by a matrix associated to the network.
Moreover, we now directly study the replacement of a single agent instead of considering the possibility for all agents to be replaced at once, yielding tighter bounds for that case, independent of the system size.

The remainder of this article is organized as follows. 
In Section~\ref{sec:Statement} we introduce the problem statement and the necessary preliminaries. 
In Section~\ref{sec:Replacements} we study the impact of the replacements on the location of the minimizers of the system and their distance with respect to the estimate. 
Section~\ref{sec:ClosedConv} focuses on the linear convergence of the RCD in a closed system considering an appropriate norm. 
Section~\ref{Sec:OpenSystems} presents the analysis of the RCD in an open multi-agent scenario. Finally, conclusions and future work are exposed in Section~\ref{Sec:conclusions}.

\section{Problem statement}
\label{sec:Statement}

In this section, we formulate the constrained resource allocation problem in open systems. First, we introduce the notation used along this work. Then, we present the problem in closed systems and we describe how the problem changes because of replacements, next we detail how the time is sampled. Finally, we present the Random Coordinate Descent algorithm which is considered in this work.

\subsection{Notation}
\label{sec:Notations}
For two vectors $x,y\in\R^n$, $\inProd{x}{y} = x^\top y = \sum_{i=1}^n x_iy_i$ denotes the usual Euclidean inner product and $\norm{x} = \sqrt{x^\top x}$ the Euclidean norm. The 1-norm of a vector $x$ is denoted as $\norm{x}_1$.
We denote the vector of size $n$ constituted of only ones by $\mathds{1}_n$ and the identity matrix of dimension $n$ by $I_n$. 
The vector of size $n$ constituted of only zeros is denoted by $\mathbf{0}_n$.
We use $B(x,r)=\{y:\norm{x-y}\le r \}$ to denote the closed ball of radius $r\ge 0$ centered at $x$.
For a positive (semipositive) definite matrix $A$, we denote by $\norm{x}_A=(x^\top Ax)^{1/2}$ the vector norm (seminorm) induced by $A$.
The Kronecker product is denoted by $\otimes$.

\subsection{Resource allocation problem}
\label{sec:Statement:RA}

We consider the resource allocation problem defined in \eqref{eq:Statement:ResourceAllocProblem}, where a budget $b$ must be distributed among $n$ agents according to some positive weight distribution $a\in\R^n$ (\textit{i.e.}, with $a_i>0$ for $i=1,\ldots,n$).
For the sake of simplicity, we first describe it in closed system (\textit{i.e.}, where the set of agents remains the same); we will see in the next subsection that this formulation directly extends to open systems.

The constraint in \eqref{eq:Statement:ResourceAllocProblem} can be equivalently expressed as $\prt{a^\top\otimes I_d}x=b$, where $\otimes$ denotes the Kronecker product. 
The feasible set of \eqref{eq:Statement:ResourceAllocProblem} is thus given by
\begin{equation}
    \label{eq:Statement:Sb}
    \Sab := \brc{x\in\R^{nd}|\prt{a^\top\otimes I_d}x=b}.
\end{equation}
For the particular case $d=1$, the resource allocation constraint can be expressed as $\inProd{a}{x}=b$. 
We make the following classical assumption on the local cost functions.

\begin{assumption}
\label{Ass:Statement:Fab}
    Each function $f_i$ is continuously differentiable, $\alpha$-strongly convex (\emph{i.e.}, $f_i(x)-\frac{\alpha}{2}\norm{x}^2$ is convex) 
    and $\beta$-smooth (\emph{i.e.}, $\norm{ \nabla f_i(x)- \nabla f_i(y)}\leq \beta\norm{x-y}$, $\forall x,y\in\R^d$).
\end{assumption}

Assumption~\ref{Ass:Statement:Fab} provides an upper and a lower bound to the curvature of the functions.
The value $\kappa:=\frac{\beta}{\alpha}\ge 1$ is called the \emph{condition number} of the functions.  
The set of the functions satisfying Assumption~\ref{Ass:Statement:Fab} is denoted by $\Fab$.

\begin{proposition}
If $f_1,\ldots,f_n\in\Fab$, then the global cost function $f$ from \eqref{eq:Statement:ResourceAllocProblem} satisfies $f\in\Fab$.
\end{proposition}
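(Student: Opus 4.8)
The plan is to exploit the separable, block-diagonal structure of $f$ over the $n$ blocks of coordinates and to verify each of the three defining properties of $\Fab$ in turn, reading them with the ambient Euclidean norm on $\R^{nd}$. Write $x=(x_1,\dots,x_n)\in\R^{nd}$ with $x_i\in\R^d$, so that $\norm{x}^2=\sum_{i=1}^n\norm{x_i}^2$ and $\nabla f(x)=\prt{\nabla f_1(x_1),\dots,\nabla f_n(x_n)}$; with this observation each property reduces to a statement that already holds blockwise by Assumption~\ref{Ass:Statement:Fab}. Continuous differentiability of $f$ is then immediate: each map $x\mapsto f_i(x_i)$ is the composition of a coordinate projection with the $C^1$ function $f_i$, hence $C^1$, and a finite sum of $C^1$ maps is $C^1$.

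For $\alpha$-strong convexity I would write $f(x)-\frac{\alpha}{2}\norm{x}^2=\sum_{i=1}^n\prt{f_i(x_i)-\frac{\alpha}{2}\norm{x_i}^2}$. By Assumption~\ref{Ass:Statement:Fab} each summand is a convex function of $x_i$, hence a convex function of $x$, and a sum of convex functions is convex; therefore $f-\frac{\alpha}{2}\norm{\cdot}^2$ is convex, i.e.\ $f$ is $\alpha$-strongly convex on $\R^{nd}$.

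For $\beta$-smoothness I would estimate, using that $\nabla f(x)-\nabla f(y)$ has $i$-th block $\nabla f_i(x_i)-\nabla f_i(y_i)$ and the $\beta$-Lipschitzness of each $\nabla f_i$,
\begin{align*}
\norm{\nabla f(x)-\nabla f(y)}^2=\sum_{i=1}^n\norm{\nabla f_i(x_i)-\nabla f_i(y_i)}^2\le\beta^2\sum_{i=1}^n\norm{x_i-y_i}^2=\beta^2\norm{x-y}^2,
\end{align*}
and taking square roots yields $\norm{\nabla f(x)-\nabla f(y)}\le\beta\norm{x-y}$. Combining the three points gives $f\in\Fab$. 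The only subtlety — and it is a mild one, not a genuine obstacle — is to keep in mind that $\Fab$ is here being applied to a function on $\R^{nd}$ rather than on $\R^d$; once the defining inequalities are read with the Euclidean norm on $\R^{nd}$, separability makes every estimate decouple blockwise and the argument is routine.
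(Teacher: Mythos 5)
Your proof is correct and follows essentially the same route as the paper: exploit separability and verify smoothness and strong convexity blockwise, with the smoothness estimate being identical. The only (harmless) difference is that you establish $\alpha$-strong convexity directly from the definition in Assumption~\ref{Ass:Statement:Fab} (convexity of $f-\frac{\alpha}{2}\norm{\cdot}^2$ as a sum of convex functions), whereas the paper sums the equivalent first-order inequality $f_i(x_i)-f_i(y_i)-\inProd{\nabla f_i(x_i)}{y_i-x_i}\ge\frac{\alpha}{2}\norm{x_i-y_i}^2$; both are standard and your version is, if anything, slightly more direct.
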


Since Assumption~\ref{Ass:Statement:Fab} guarantees that $f$ is $\alpha$-strongly convex, the solution of the problem \eqref{eq:Statement:ResourceAllocProblem} denoted by $x^*$ is unique \cite{boyd2004convex}.
By using KKT conditions we obtain that there exists a $\lambda^*\in\R^d$, such that a necessary and sufficient condition for the optimality of $x^*$ is
\begin{equation}
    \label{eq:Statement:Optimality}
    \nabla f(x^*) = (a^\top\otimes I_d)^\top\lambda^* 
\end{equation}
which equivalently reads $\nabla f(x^*) = a\otimes\lambda^*$,
where $\lambda^*$ is a Lagrange multiplier vector \cite{boyd2004convex}.

\subsection{Network description and open system}
\label{sec:Statement:OpenNet}
In addition to problem \eqref{eq:Statement:ResourceAllocProblem}, we assume to have an undirected and connected network $\Gcal = (\Vcal,\Ecal)$ where the set of nodes is given by $\Vcal = \brc{1,\ldots,n}$ and the set of edges by $\Ecal\subseteq \Vcal\times\Vcal$.
Each agent $i\in\Vcal$ has access to a local cost function $f_i:\R^d\to\R$ and to a local variable $x_i\in\R^d$.
Agents can exchange information at random times through pairwise interactions according to the network $\Gcal$. 
Whenever an interaction happens in the system, an edge $(i,j)\in\Ecal$ is selected with some fixed probability $p_{ij}>0$ and agents $i$ and $j$ can then exchange information in a bidirectional manner to update their respective estimates.

Moreover, we consider that \emph{replacements of agents} happen in the system, making it \emph{open}.
Each agent $i\in\Vcal$ gets replaced at random time instants, resulting in the change of its local cost function, and hence of the global minimizer $x^*$.
Following the approach in \cite{OpenDo:OpenDGDStability}, we restrict the location of the minimizers of the local functions:

\begin{assumption}
\label{Ass:Statement:B(0,c)&f*=0}
    There exists $c>0$, such that for all $i\in\Vcal$, the minimizer of $f_i$ denoted as $\bar x_i^*:=\argmin_{x}f_i(x)$ satisfies $\bar x_i^*\in B(\mathbf 0_d,c)$.
    Moreover, without loss of generality $f_i(\bar x_i^*)=0$ for all $i\in\Vcal$.
\end{assumption}

Assumption~\ref{Ass:Statement:B(0,c)&f*=0} guarantees a certain level of uniformity among the local cost functions.
In particular, it prevents arbitrary changes of functions, and thus of $x^*$, during replacements. Also, since our objective is to study the convergence of the minimizer and how it is affected by the replacements, we do not use the actual values of $f_i$.

For the sake of simplicity, we assume that when agent $i$ is replaced, the joining agent that takes its place retrieves its label $i$ and its estimate $x_i$, so that the constraint $\prt{a^\top\otimes I_d}x=b$ is preserved, but receives a new local cost function satisfying Assumptions~\ref{Ass:Statement:Fab} and \ref{Ass:Statement:B(0,c)&f*=0}.
Denoting $f_i^k$ the local cost function held by agent $i$ at the time instant $k$, we can then reformulate \eqref{eq:Statement:ResourceAllocProblem} as the time-varying resource allocation problem
\begin{equation}
    \label{eq:Statement:Objective:OpenResourceAllocProblem}
    \min_{x\in {\Sab}} f^k(x) := \sum_{i=1}^n f_i^k(x_i),
\end{equation}
where the changes of the cost functions are due to replacements.
The solution of \eqref{eq:Statement:Objective:OpenResourceAllocProblem} can thus differ from a time instant $t_k$ to another, and we denote it by $\optx{k} := \argmin_{x\in \Sab} f^k(x)$.
The objective of the agents is to track $\optx{k}$ as well as possible even though replacements happen in the system. 

\subsection{Discrete-event modelling}
\label{sec:Statement:DT&Goal}
The evolution of the open network presented in the previous section is characterized by the instantaneous occurrence at random time instants of either pairwise interactions or replacements.
With a slight abuse of language, we call \quotes{event} such occurrence which results in a modification of the system depending on its nature.
We moreover define the \quotes{\emph{event set}} of the system from which that nature is drawn as
\begin{equation}
    \label{eq:Statement:EventSet}
    \Xi = U \cup R,
\end{equation}
where $U := \bigcup_{(i,j)\in\Ecal}U_{ij}$ is the set of all possible events $U_{ij}$, \textit{i.e.}, the pairwise interaction between two connected agents $i$ and $j$, and $R := \bigcup_{i\in\Vcal}R_i$ is the set of all events $R_i$, \textit{i.e.}, the replacement of a single agent $i$ in the system.
This formulation actually relates with \emph{discrete event systems} (see \emph{e.g.}, \cite{kurve2013discreteEvent}).

We assume that two events never occur simultaneously, so that we can consider a discrete evolution of the time where each time-step $k\in\N$ corresponds to the time instant at which the $k$-th event takes place.
For all $k\in\N$, we then define the random variable $\xi_k\in\Xi$ which characterizes the nature of the event taking place at the time-step $k$.
We moreover consider the following assumption that guarantees that replacements and interactions are independent processes, so that the event happening at time-step $k$ is a pairwise interaction (\textit{i.e.}, $\xi_k\in U$) with fixed probability $p_U$, and a replacement (\textit{i.e.}, $\xi_k\in R$) with fixed probability $p_R = 1-p_U$.
\begin{assumption}
\label{Ass:Statement:Indep}
    For all $k\in\N$, $\xi_k$ is independent of $\xi_j$ for all $j\neq k$, and of any variable in the system prior to time-step $k$, such as the estimates or local cost functions.
\end{assumption}

Our goal is to characterize the evolution of the distance between the estimates held by the agents $x^k$ and the instantaneous minimizer $\optx{k}$.
A choice of measure for this distance is, \textit{e.g.}, the squared Euclidean norm $\norm{x^k-\optx{k}}^2$, although we will see later that this metric might not be appropriate for general graph topologies.
Assumption~\ref{Ass:Statement:Indep} allows for performing this analysis by studying independently the effect of each type of event on our metric in order to characterize its evolution in expectation over a single time step.

\subsection{Random Coordinate Descent (RCD) algorithm}
\label{sec:Statement:RCD}

To compute the solution of \eqref{eq:Statement:Objective:OpenResourceAllocProblem} we consider the Random Coordinate Descent (RCD) algorithm introduced in \cite{necoara2013random}.
This algorithm involves the update of the states of only a pair of neighbouring agents at each iteration, so that it is distributed and its computational complexity is cheap.
Hence, in the event $U_{ij}$, \textit{i.e.}, whenever the pair of agents $(i,j)\in\Ecal$ is selected with probability $p_{ij}$ during a pairwise update event,
they perform an \emph{RCD update}, which is defined as follows for some nonnegative step-size $h\geq0$:
\begin{equation}
\label{eq:Statement:RCD:UpdateRule}
    x^+ = x - h\Qbf^{ij}\nabla f(x),
\end{equation}
where $\Qbf^{ij}$ is the $nd \times nd$ matrix defined as $\Qbf^{ij}=Q^{ij}\otimes I_d$, with $Q^{ij}$ the $n \times n$ matrix filled with zeroes except for the four following entries:
\begin{align*}
    &[Q^{ij}]_{i,i} = \frac{a_j^2}{a_i^2+a_j^2};&&
    [Q^{ij}]_{i,j} = -\frac{a_ia_j}{a_i^2+a_j^2};\\
    &[Q^{ij}]_{j,i} = -\frac{a_ia_j}{a_i^2+a_j^2};&
    &
    [Q^{ij}]_{j,j} = \frac{a_i^2}{a_i^2+a_j^2}
    .
\end{align*}
With the update rule \eqref{eq:Statement:RCD:UpdateRule}, only agents $i$ and $j$ update their estimates while all the other agents keep it the same.
For agents $i$ and $j$, \eqref{eq:Statement:RCD:UpdateRule} essentially amounts to perform a gradient step on the function $f_i(x_i) + f_j(x_j)$ under the constraint that $a_ix_i+a_jx_j$ remains constant.
This ensures that the resource allocation constraint is preserved as long as the starting point satisfies it. 
In particular, in the case of homogeneous agents (\textit{i.e.}, where $a = \mathds1_n$), then one shows that 
$x_i^+ = x_i - \frac{h}{2}(\nabla f_i(x_i)-\nabla f_j(x_j))$, so that the update follows both gradients with equal weight while preserving the constraint.

Observe that the method presented here requires 
(i) the initial point to be feasible (which is rather standard for such methods), 
and (ii) that the estimates are maintained during replacements (which is assumed in Section~\ref{sec:Statement:OpenNet}).
Otherwise one would need to design a process to run in parallel of the optimization process to meet the constraint.
This is, however, out of the scope of this paper.

\begin{remark}
\label{rem:Statement:RCD:Derivation}
The update rule \eqref{eq:Statement:RCD:UpdateRule} can be formally obtained by solving the following optimization problem, which corresponds to the interpretation given above (we refer to \cite{necoara2013random} for details):
\begin{equation}
\label{eq:Statement:RCD:minProb}
    \arg\min_{s_i,s_j\in\R^d:a_is_i+a_js_j=0}
    \left\langle
    \begin{bmatrix}
        \nabla f_i(x_i)\\ \nabla f_j(x_j)
    \end{bmatrix},
    \begin{bmatrix}
        s_i\\s_j
    \end{bmatrix}\right\rangle
    + \frac\beta2\left\lVert
    \begin{bmatrix}
        s_i\\s_j
    \end{bmatrix}
    \right\lVert^2.
\end{equation}
Based on the approach of \cite{necoara2013random}, one can then show that the optimal step-size that solves \eqref{eq:Statement:RCD:minProb} is given by $h = 1/\beta$.
\end{remark}

We also introduce the following matrix that builds on the definition of the update rule \eqref{eq:Statement:RCD:UpdateRule} and that will be used later:
\begin{equation}
    \label{eq:Statement:RCD:Lpbf}
    \Lpbf 
    = \sum_{(i,j)\in\Ecal} p_{ij}\Qbf^{ij} 
    = \prt{\sum_{(i,j)\in\Ecal}p_{ij}Q^{ij}}\otimes I_d 
    = L_p\otimes I_d.
\end{equation}
This matrix appears in the dynamics corresponding to the conditional expectation:
\begin{equation}\label{eq:expected_RCD_algorithm}
\Ep{x(k+1)|x(k)}=x(k)-h\Lpbf \nabla f(x(k)),    
\end{equation}
and will be used for the definition of an appropriate norm for the analysis of the RCD algorithm. 
Observe that by definition of $Q^{ij}$ and $L_p$, we have
\begin{equation}
    \label{eq:statement:RCD:Qa=La=0}
    L_pa = Q^{ij}a = \mathbf{0}_n,
\end{equation}
which means that zero is an eigenvalue of both $Q^{ij}$ and $L_p$ with corresponding eigenvector $a$. 
We denote by $\lambda_2$ and $\lambda_n$ respectively the second smallest and the largest eigenvalues of $L_p$.
Since $L_p$ is symmetric, all the eigenvalues are real and satisfy $0=\lambda_1<\lambda_2\le\cdots\le\lambda_n$ when the graph $\Gcal$ is connected (we refer to Lemma 3.3 of \cite{necoara2013random} for a detailed proof).

\begin{remark}
For a graph $\Gcal=(\Vcal,\Ecal)$, when $a=\mathds1_n$ (homogeneous agents) and the probabilities $p_{ij}$ are uniformly distributed, we have $L_p = \frac{1}{2|\Ecal|}L$, where $L$ is the usual Laplacian of the graph.
Hence, we refer to $L_p$ as a \quotes{scaled Laplacian}, as it enjoys similar properties, especially in terms of eigenvalues.
\end{remark}

\section{Effect of replacements} 
\label{sec:Replacements}
In this section, we bound the distance by which the minimizer of $f$ can change after the replacement of a single agent, \textit{i.e.}, the modification of a single function. 
Our first two results concern the location of the minimizer: 
Lemma~\ref{lem:Repl:minLoc:Global} is a generalisation of the analysis performed in \cite{monnoyer2021random}, and Lemma~\ref{lem:Repl:minLoc:Local} studies the location of the minimizer held by each individual agent.

\begin{lemma}
\label{lem:Repl:minLoc:Global}
    Let $x^*:=\argmin_{x\in\Sab}\sum_{i=1}^nf_i(x_i)$. 
    If all $f_i$ satisfy Assumptions~\ref{Ass:Statement:Fab} and \ref{Ass:Statement:B(0,c)&f*=0}, then $x^*\in B(\mathbf 0_{nd},R_{b,\kappa})$ with
    \begin{equation}
        \label{eq:lem:Repl:minLoc:Global}
        R_{b,\kappa} = \sqrt{n\kappa} \prt{c+\frac{c}{\sqrt\kappa}+\frac{\norm{b}}{\sqrt{n}\norm{a}}},
    \end{equation}
    where $c$ was defined in Assumption~\ref{Ass:Statement:B(0,c)&f*=0}.
\end{lemma}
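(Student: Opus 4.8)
The plan is to reduce the statement to two estimates: a bound on the dual multiplier $\lambda^*$ and a bound on the displacement $\|x^*-\bar x^*\|$, where $\bar x^* := (\bar x_1^*,\dots,\bar x_n^*)$ collects the unconstrained local minimizers. Since each $f_i$ is differentiable and convex, $\nabla f_i(\bar x_i^*)=\mathbf 0_d$, and Assumption~\ref{Ass:Statement:B(0,c)&f*=0} gives $\|\bar x^*\|\le c\sqrt n$. The optimality condition \eqref{eq:Statement:Optimality} reads componentwise as $\nabla f_i(x_i^*)=a_i\lambda^*$ for all $i$, and feasibility of $x^*$ gives $\sum_{i=1}^n a_ix_i^*=b$. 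Subtracting $\sum_i a_i\bar x_i^*$, I set $w := b-\sum_i a_i\bar x_i^* = \sum_i a_i(x_i^*-\bar x_i^*)$, and bound $\|w\|\le \|b\|+\sum_i a_i\|\bar x_i^*\| \le \|b\| + c\sqrt n\,\|a\|$, the last step by Cauchy--Schwarz applied to the vectors $a$ and $(\|\bar x_1^*\|,\dots,\|\bar x_n^*\|)$ in $\R^n$.

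Next I would bound $\|\lambda^*\|$ using $\beta$-smoothness. By the Baillon--Haddad co-coercivity inequality (valid as each $f_i$ is convex and $\beta$-smooth), $\langle \nabla f_i(x_i^*)-\nabla f_i(\bar x_i^*),\,x_i^*-\bar x_i^*\rangle \ge \tfrac1\beta\|\nabla f_i(x_i^*)-\nabla f_i(\bar x_i^*)\|^2 = \tfrac{a_i^2}{\beta}\|\lambda^*\|^2$. Substituting $\nabla f_i(x_i^*)-\nabla f_i(\bar x_i^*)=a_i\lambda^*$ on the left and summing over $i$ yields $\langle \lambda^*,w\rangle \ge \tfrac{\|a\|^2}{\beta}\|\lambda^*\|^2$, hence $\|\lambda^*\|\le \tfrac{\beta}{\|a\|^2}\|w\|$ after one Cauchy--Schwarz step on the left-hand side.

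Then I would bound the displacement using $\alpha$-strong convexity: $\alpha\|x_i^*-\bar x_i^*\|^2 \le \langle \nabla f_i(x_i^*)-\nabla f_i(\bar x_i^*),\,x_i^*-\bar x_i^*\rangle = a_i\langle \lambda^*,x_i^*-\bar x_i^*\rangle$. The crucial point is to sum over $i$ \emph{before} estimating each term, so that $\alpha\|x^*-\bar x^*\|^2 \le \langle \lambda^*,\sum_i a_i(x_i^*-\bar x_i^*)\rangle = \langle \lambda^*,w\rangle \le \|\lambda^*\|\,\|w\| \le \tfrac{\beta}{\|a\|^2}\|w\|^2$, i.e.\ $\|x^*-\bar x^*\| \le \sqrt\kappa\,\tfrac{\|w\|}{\|a\|}$. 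Finally $\|x^*\| \le \|\bar x^*\| + \|x^*-\bar x^*\| \le c\sqrt n + \sqrt\kappa\big(\tfrac{\|b\|}{\|a\|}+c\sqrt n\big) = c\sqrt n + c\sqrt{n\kappa} + \tfrac{\sqrt\kappa\,\|b\|}{\|a\|}$, which equals $R_{b,\kappa}$ after factoring out $\sqrt{n\kappa}$.

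The only delicate point is the constant: bounding each $\|x_i^*-\bar x_i^*\|$ separately by $\tfrac{a_i}{\alpha}\|\lambda^*\|$ and then taking the Euclidean norm over blocks would produce a spurious factor (a $\kappa$ instead of $\sqrt\kappa$, and $\sqrt n\,\kappa$ instead of $\sqrt{n\kappa}$); summing the strong-convexity inequalities first and invoking the constraint through $w$ is what makes the bound tight. An alternative route works with the conjugates: since $x_i^*=\igrad(a_i\lambda^*)$ and $\bar x_i^*=\igrad(\mathbf 0_d)$, the map $\lambda\mapsto \sum_i a_i\igrad(a_i\lambda)$ is $\tfrac{\|a\|^2}{\beta}$-strongly monotone and $\tfrac{\|a\|^2}{\alpha}$-Lipschitz, giving the same two bounds; I would nonetheless keep the elementary argument above.
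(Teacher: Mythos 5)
Your proof is correct and lands on exactly the constant $R_{b,\kappa}$ in \eqref{eq:lem:Repl:minLoc:Global}, but it takes a genuinely different route from the paper's. The paper never invokes the optimality condition: it exhibits the explicit feasible point $x_b=\frac{1}{\norm{a}^2}(a\otimes I_d)b$, upper-bounds $f(x_b)\le\frac{\beta}{2}\prt{\sqrt{n}c+\norm{b}/\norm{a}}^2$ by smoothness, lower-bounds $f(x)\ge\frac{\alpha}{2}\norm{x-\bar x^*}^2$ by strong convexity, and concludes that any feasible $x$ outside $B(\mathbf 0_{nd},R_{b,\kappa})$ satisfies $f(x)>f(x_b)\ge f(x^*)$ --- a pure function-value comparison against a feasible competitor. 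You instead work at the level of the KKT condition $\nabla f_i(x_i^*)=a_i\lambda^*$, bounding $\norm{\lambda^*}$ by co-coercivity and the aggregate displacement $\norm{x^*-\bar x^*}$ by strong monotonicity of the gradient, with the essential step of summing over $i$ \emph{before} applying Cauchy--Schwarz so that the constraint collapses the cross terms onto $\inProd{\lambda^*}{w}$. Each approach buys something: the paper's is shorter and does not need the multiplier to be introduced or estimated at all; yours recovers the bound \eqref{eq:equation2_proof_bound_minimizer} of Lemma~\ref{lem:Repl:minLoc:Local} as a byproduct and additionally yields $\norm{x^*-\bar x^*}\le\sqrt\kappa\,\norm{w}/\norm{a}$, an aggregate displacement estimate scaling like $\sqrt\kappa$ rather than the factor $\kappa$ one would get by stacking the per-agent bounds \eqref{eq:lem:Repl:minLoc:Local:nolambda}. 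Your closing observation about where the naive per-coordinate estimate loses a $\sqrt\kappa$ is exactly right, and it is the reason the paper proves Lemma~\ref{lem:Repl:minLoc:Global} independently rather than deducing it from Lemma~\ref{lem:Repl:minLoc:Local}.
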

\begin{proof}
    The proof is left to Appendix~\ref{appendix:minLocLemmas1}.
\end{proof}

\begin{lemma}
\label{lem:Repl:minLoc:Local}
    Let $x^* := \argmin_{x\in\Sab}  \sum_{i=1}^n f_i(x_i)$. If $f_i$ satisfies Assumptions~\ref{Ass:Statement:Fab} and \ref{Ass:Statement:B(0,c)&f*=0} for all $i=1,\ldots,n$, then 
    for $\lambda^*$ defined in \eqref{eq:Statement:Optimality}
    \begin{equation}\label{eq:equation2_proof_bound_minimizer}
        \norm{\lambda^*}\le\frac{\beta}{\norm{a}^2}\prt{\norm{b}+c\norm{a}_1};
    \end{equation}
    and
    \begin{equation}\label{eq:lem:Repl:minLoc:Local:nolambda}
        \norm{x_i^*} \le \frac{a_i}{\norm{a}^2}\kappa\prt{\norm{b}+c\norm{a}_1}+c.
    \end{equation}
\end{lemma}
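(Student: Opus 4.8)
The plan is to work from the optimality condition \eqref{eq:Statement:Optimality}, which componentwise states $\nabla f_i(x_i^*) = a_i\lambda^*$ for every $i$, combined with the fact that $\bar x_i^*$ minimizes $f_i$, so $\nabla f_i(\bar x_i^*) = \mathbf 0_d$.

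For \eqref{eq:equation2_proof_bound_minimizer}, the idea is to pair the per-agent optimality residuals with $\lambda^*$ and sum, so that the constraint $\sum_i a_i x_i^* = b$ collapses the displacements into a single inner product. Concretely, co-coercivity of the gradient of a convex $\beta$-smooth function \cite{nesterov2018lectures} gives $\inProd{\nabla f_i(x_i^*) - \nabla f_i(\bar x_i^*)}{x_i^* - \bar x_i^*} \ge \tfrac1\beta\norm{\nabla f_i(x_i^*) - \nabla f_i(\bar x_i^*)}^2$, i.e.\ $\inProd{a_i\lambda^*}{x_i^* - \bar x_i^*} \ge \tfrac{a_i^2}{\beta}\norm{\lambda^*}^2$. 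Summing over $i$ and writing $\bar b := \sum_i a_i\bar x_i^*$ yields $\inProd{\lambda^*}{b - \bar b} \ge \tfrac{\norm a^2}{\beta}\norm{\lambda^*}^2$; since $\norm{\bar b} \le \sum_i a_i\norm{\bar x_i^*} \le c\norm a_1$ by Assumption~\ref{Ass:Statement:B(0,c)&f*=0}, Cauchy--Schwarz bounds the left-hand side by $\prt{\norm b + c\norm a_1}\norm{\lambda^*}$, and dividing by $\norm{\lambda^*}$ (the claim being trivial when $\lambda^* = \mathbf 0_d$) gives \eqref{eq:equation2_proof_bound_minimizer}.

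For \eqref{eq:lem:Repl:minLoc:Local:nolambda}, I would use $\alpha$-strong convexity in the form $\alpha\norm{x_i^* - \bar x_i^*}^2 \le \inProd{\nabla f_i(x_i^*)}{x_i^* - \bar x_i^*} \le \norm{\nabla f_i(x_i^*)}\,\norm{x_i^* - \bar x_i^*}$, hence $\norm{x_i^* - \bar x_i^*} \le \tfrac1\alpha\norm{\nabla f_i(x_i^*)} = \tfrac{a_i}{\alpha}\norm{\lambda^*}$; then the triangle inequality with $\norm{\bar x_i^*}\le c$, together with the bound just obtained on $\norm{\lambda^*}$, gives $\norm{x_i^*} \le \tfrac{a_i}{\alpha}\cdot\tfrac{\beta}{\norm a^2}\prt{\norm b + c\norm a_1} + c$, which is \eqref{eq:lem:Repl:minLoc:Local:nolambda} since $\kappa = \beta/\alpha$.

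The main obstacle is orienting the inequalities correctly when aggregating the single-agent estimates through the constraint: a direct triangle-inequality bound on $\sum_i a_i\norm{x_i^*}$ runs the wrong way, so the crucial step is to take the inner product with $\lambda^*$ first, which turns $\sum_i a_i x_i^* = b$ into $\inProd{\lambda^*}{b-\bar b}$, and to invoke co-coercivity (rather than merely Lipschitzness of $\nabla f_i$) so that the factor $\norm a^2/\beta$ lands on the correct side of the estimate.
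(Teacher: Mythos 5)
Your proposal is correct and follows essentially the same route as the paper: the co-coercivity inequality you invoke is exactly the $\beta$-smoothness bound $\norm{\nabla f_i(x_i^*)}^2\le\beta\inProd{\nabla f_i(x_i^*)}{x_i^*-\bar x_i^*}$ the paper uses, followed by the same summation over agents, Cauchy--Schwarz, and triangle inequality to get \eqref{eq:equation2_proof_bound_minimizer}, and the same strong-convexity plus triangle-inequality argument for \eqref{eq:lem:Repl:minLoc:Local:nolambda}. No gaps.
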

\begin{proof}
    The proof is left to Appendix~\ref{appendix:minLocLemmas2}.
\end{proof}

We can now use these two lemmas to characterize the evolution of the distance between the estimate $x^k$ and the minimizer $\optx{k}$ after a replacement event. 
Without loss of generality, we assume that agent $n$, and hence $f_n$, is replaced, and for the $n+1$ functions $f_1,f_2,\ldots,f_{n-1},f_n^{(1)},f_n^{(2)}$ satisfying Assumptions~\ref{Ass:Statement:Fab} and \ref{Ass:Statement:B(0,c)&f*=0} we define the minimizer before a replacement $x^{(1)}$, and after a replacement $x^{(2)}$ as
\begin{align}
    \label{eq:Repl:x1&x2}
    x^{(1)} 
    &:= \argmin_{x\in \Sab} \prt{\sum\nolimits_{i=1}^{n-1}f_i(x_i) + f_n^{(1)}(x_n)};\nonumber\\
    x^{(2)} 
    &:= \argmin_{x\in \Sab} \prt{\sum\nolimits_{i=1}^{n-1}f_i(x_i) + f_n^{(2)}(x_n)}.
\end{align}

\begin{proposition}
\label{prop:Repl:minChange}
    Consider $x^{(1)}$ and $x^{(2)}$ as defined in \eqref{eq:Repl:x1&x2}, let $\amax$ and $\amin$ respectively denote the largest and smallest values in $a$, and let $\rho_a := \frac{\amax^2}{\norm{a}^2-\amax^2}$, then 
    \begin{equation}
    \label{eq:Bound_Mnk}      
      \norm{x^{(1)}-x^{(2)}}^2
      \le \min\{\psi_{n,\kappa},\chi_{n,\kappa},\theta_{n,\kappa}\}
      =:\bar M_{n,\kappa}^2,
    \end{equation}
    with
    \vspace{-0.3cm}
    {\small
    \begin{align}
        \label{eq:prop:Repl:minChange:psi}      
        \psi_{n,\kappa} 
        &= 4n\kappa \prt{c+\frac{c}{\sqrt\kappa}+\frac{\norm{b}}{\sqrt{n}\norm{a}}}^2;\\
        \label{eq:prop:Repl:minChange:chi}
        \chi_{n,\kappa} 
        &= 8\prt{\frac{\amax}{\norm{a}^2}\kappa(\norm{b}+c\norm{a}_1)+c}^2;\\
        \label{eq:prop:Repl:minChange:theta}
        \theta_{n,\kappa} 
        &= 4\prt{1+\frac{(\kappa+1)^2}{4\kappa}\rho_a} \prt{\frac{\amax}{\norm{a}^2}\kappa(\norm{b}+c\norm{a}_1)+c}^2.
    \end{align}}
\end{proposition}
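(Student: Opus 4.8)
The plan is to obtain three independent upper bounds on $\norm{x^{(1)}-x^{(2)}}^2$ and then take their minimum. For each bound I would introduce the $n+1$ functions and consider $x^{(1)}$ and $x^{(2)}$ as defined in \eqref{eq:Repl:x1&x2}; these are the constrained minimizers over $\Sab$ of two cost functions that differ only in the $n$-th component. Each of the three bounds follows from applying a different one of the preliminary results to each of $x^{(1)}$ and $x^{(2)}$, and then using the triangle inequality together with the elementary fact that $\norm{u-v}^2\le 2\norm u^2+2\norm v^2$.

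For $\psi_{n,\kappa}$: both $x^{(1)}$ and $x^{(2)}$ lie in $B(\mathbf 0_{nd},R_{b,\kappa})$ by Lemma~\ref{lem:Repl:minLoc:Global}, since each of the two function collections satisfies Assumptions~\ref{Ass:Statement:Fab} and \ref{Ass:Statement:B(0,c)&f*=0}. Hence $\norm{x^{(1)}-x^{(2)}}\le 2R_{b,\kappa}$, and squaring gives $\norm{x^{(1)}-x^{(2)}}^2\le 4R_{b,\kappa}^2 = \psi_{n,\kappa}$. For $\chi_{n,\kappa}$: the key observation is that the first $n-1$ coordinates of $x^{(1)}$ and $x^{(2)}$ need not coincide, but I can bound each block $\norm{x_i^{(1)}}$ and $\norm{x_i^{(2)}}$ using \eqref{eq:lem:Repl:minLoc:Local:nolambda} from Lemma~\ref{lem:Repl:minLoc:Local}, with $a_i$ replaced by $\amax$ as a uniform upper bound. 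Summing over $i$ gives a bound on $\norm{x^{(1)}}^2$ and $\norm{x^{(2)}}^2$ — wait, that would reintroduce an $n$-dependence. The better route: the two minimizers differ because only $f_n$ changed, so one should argue that $x_i^{(1)}=x_i^{(2)}$ is \emph{not} generally true, but the \emph{difference} concentrates. Concretely, I expect the cleaner argument uses the optimality conditions $\nabla f_i(x_i^{(1)})=a_i\lambda^{(1)}$ and $\nabla f_i(x_i^{(2)})=a_i\lambda^{(2)}$ for $i<n$: by strong convexity, $\norm{x_i^{(1)}-x_i^{(2)}}\le \frac{a_i}{\alpha}\norm{\lambda^{(1)}-\lambda^{(2)}}$, so $\sum_{i=1}^{n-1}\norm{x_i^{(1)}-x_i^{(2)}}^2\le \frac{\norm a^2-\amax^2}{\alpha^2}\norm{\lambda^{(1)}-\lambda^{(2)}}^2$ (bounding $a_n^2\ge\amin^2$ away and keeping $\sum_{i<n}a_i^2=\norm a^2-a_n^2\le\norm a^2-\amin^2$; being careful which extreme is needed), while the $n$-th block is bounded by $2\norm{x_n^{(1)}}^2+2\norm{x_n^{(2)}}^2$ via \eqref{eq:lem:Repl:minLoc:Local:nolambda}. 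This is where $\theta_{n,\kappa}$ and the factor $\rho_a=\amax^2/(\norm a^2-\amax^2)$ come from, after relating $\norm{\lambda^{(1)}-\lambda^{(2)}}$ back to a bound on the $n$-th block through the constraint $\sum a_i(x_i^{(1)}-x_i^{(2)})=0$; the $\chi_{n,\kappa}$ bound is then the coarser version that drops the constraint coupling and just bounds $\norm{x^{(1)}-x^{(2)}}^2\le 2\norm{x^{(1)}-x_{\text{part}}}^2+\dots$, or more simply, since only block $n$ is ``free'' in the sense that blocks $i<n$ track $\lambda$, bounds everything by $8\big(\tfrac{\amax}{\norm a^2}\kappa(\norm b+c\norm a_1)+c\big)^2$ using \eqref{eq:lem:Repl:minLoc:Local:nolambda} on the relevant blocks and the $2\norm u^2+2\norm v^2$ inequality.

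The main obstacle will be the $\theta_{n,\kappa}$ bound: it requires carefully exploiting the constraint $(a^\top\otimes I_d)(x^{(1)}-x^{(2)})=\mathbf 0$ to propagate a bound on the single changed block $x_n^{(1)}-x_n^{(2)}$ to the aggregate $\sum_{i<n}a_i(x_i^{(1)}-x_i^{(2)})=-a_n(x_n^{(1)}-x_n^{(2)})$, combining this with $\norm{x_i^{(1)}-x_i^{(2)}}\le\frac{a_i}{\alpha}\norm{\lambda^{(1)}-\lambda^{(2)}}$ and then $\norm{\sum_{i<n}a_i(x_i^{(1)}-x_i^{(2)})}\ge$ something like $\frac{\alpha}{\kappa}$ times... — i.e. bounding $\norm{\lambda^{(1)}-\lambda^{(2)}}$ both above (from the changed block) and translating through $\beta$-smoothness. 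The bookkeeping of which of $\amax,\amin$ appears where, and getting exactly the coefficient $1+\frac{(\kappa+1)^2}{4\kappa}\rho_a$, is delicate: the $\frac{(\kappa+1)^2}{4\kappa}$ factor strongly suggests an AM--GM/Young-type step of the form $\frac\alpha2 t^2 + \frac{1}{2\beta}s^2 \ge$ cross term, or the identity $\frac12(\sqrt\kappa+1/\sqrt\kappa)=\frac{\kappa+1}{2\sqrt\kappa}$ squared. I would isolate that algebraic lemma first, prove the three bounds in the order $\psi$ (easiest), $\chi$, then $\theta$, and conclude by taking the minimum as in \eqref{eq:Bound_Mnk}.
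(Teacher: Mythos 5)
Your $\psi_{n,\kappa}$ argument is exactly the paper's. For the other two bounds, however, the proposal has a genuine gap: you never pin down the one device that makes both $\chi_{n,\kappa}$ and $\theta_{n,\kappa}$ independent of $n$, namely summing a per-coordinate convexity inequality over the \emph{unchanged} agents and collapsing that sum through the constraint. Concretely, for $i\le n-1$ the optimality conditions give $\nabla f_i(x_i^{(q)})=a_i\lambda^{(q)}$, so strong convexity yields $a_i\inProd{\lambda^{(1)}-\lambda^{(2)}}{x_i^{(1)}-x_i^{(2)}}\ge\alpha\norm{x_i^{(1)}-x_i^{(2)}}^2$; summing over $i\le n-1$ and using $\sum_{i=1}^{n}a_i\prt{x_i^{(1)}-x_i^{(2)}}=\mathbf 0_d$ turns the left-hand side into the single term $a_n\inProd{\lambda^{(1)}-\lambda^{(2)}}{x_n^{(2)}-x_n^{(1)}}$, which is then controlled by Cauchy--Schwarz together with the bounds \eqref{eq:equation2_proof_bound_minimizer} on $\norm{\lambda^{(q)}}$ and \eqref{eq:lem:Repl:minLoc:Local:nolambda} on $\norm{x_n^{(q)}}$. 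That is the whole of $\chi_{n,\kappa}$; it is emphatically \emph{not} ``the coarser version that drops the constraint coupling'' --- without the constraint you are back to summing $n-1$ individual block bounds, which reintroduces exactly the $n$-dependence you yourself flagged and never resolved.

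For $\theta_{n,\kappa}$ your skeleton is closer to the paper's, and your guess that the factor $\frac{(\kappa+1)^2}{4\kappa}$ comes from a Young/quadratic-maximization step is the right mechanism, but it is left uninstantiated. The paper replaces plain strong convexity by the interpolation inequality $a_i(1+\kappa^{-1})\inProd{\lambda^{(1)}-\lambda^{(2)}}{x_i^{(1)}-x_i^{(2)}}\ge\beta^{-1}a_i^2\norm{\lambda^{(1)}-\lambda^{(2)}}^2+\alpha\norm{x_i^{(1)}-x_i^{(2)}}^2$, performs the same summation-plus-constraint collapse, applies Cauchy--Schwarz, and then observes that the resulting bound $\alpha\norm{y^{(1)}-y^{(2)}}^2\le -m\beta^{-1}z^2+a_n(1+\kappa^{-1})\norm{x_n^{(1)}-x_n^{(2)}}z$, with $z=\norm{\lambda^{(1)}-\lambda^{(2)}}$ and $m=\sum_{i<n}a_i^2$, is a concave parabola whose maximum over $z$ equals $\frac{a_n^2(1+\kappa^{-1})^2\beta}{4m}\norm{x_n^{(1)}-x_n^{(2)}}^2$; dividing by $\alpha$ produces the factor $\frac{(\kappa+1)^2}{4\kappa}\cdot\frac{a_n^2}{m}\le\frac{(\kappa+1)^2}{4\kappa}\rho_a$. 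Note the direction of the extremal bound: $m=\norm{a}^2-a_n^2$ sits in a denominator, so what is needed is the \emph{lower} bound $m\ge\norm{a}^2-\amax^2$, not the upper bound on $\sum_{i<n}a_i^2$ that you wrote. Your alternative chain via $\norm{x_i^{(1)}-x_i^{(2)}}\le\frac{a_i}{\alpha}\norm{\lambda^{(1)}-\lambda^{(2)}}$ plus a separate smoothness bound on $\norm{\lambda^{(1)}-\lambda^{(2)}}$ might be made to work, but as written it does not produce the stated constants and is not a proof.
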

\begin{proof}
    The proof is left to Appendix~\ref{sec:Appendix:MinChange:Proof}.
\end{proof}

The bound $\bar M_{n,\kappa}^2$ from Proposition~\ref{prop:Repl:minChange} is obtained by taking the minimum between three quantities: $\psi_{n,\kappa}$, $\chi_{n,\kappa}$ and $\theta_{n,\kappa}$.
The first one follows from the largest possible distance existing between two minimizers, defined by the region in which they can be located.
The second and third ones rely on the largest possible distance between the local minimizers corresponding to the replaced agents. While $\chi_{n,\kappa}$ and $\theta_{n,\kappa}$ are derived using inequalities associated with $\alpha$-strongly convex functions, the proof of $\theta_{n,\kappa}$ also involves the use of additional properties corresponding to $\beta$-smooth functions and the determination of the maximum value of a concave function. The bound $\theta_{n,\kappa}$ shows a strong dependence on the weights of the agents through the coefficient $\rho_a$, which is not present in the other two bounds. Notice that the bounds $\chi_{n,\kappa}$ and $\theta_{n,\kappa}$ coincide when
$$
\frac{\prt{\kappa+1}^2}{4\kappa}\rho_a=1.
$$

Let $\bar a$ and $\overline{a^2}$ respectively stand for the average value and average of the squared values of $a$.
One can more generally highlight the dependencies of the three quantities with the parameters using standard algebraic manipulations, yielding
{\small
\begin{align*}
    \psi_{n,\kappa}
    &\leq 4n\kappa\prt{2c+\frac{\norm{b}}{n\bar a}}^2
    = O(n\kappa);\\
    \chi_{n,\kappa}
    &\leq 8\prt{\frac{\amax^2}{\overline{a^2}}\prt{\frac{\norm{b}}{n\bar a}+2c}\kappa}^2
    = O(\kappa^2);\\
    \theta_{n,\kappa}
    &\leq 4\prt{\tfrac{\amax^2}{\amin^2}\prt{\tfrac{\kappa}{2(n-1)}+2}}\prt{\tfrac{\amax^2}{\overline{a^2}}\prt{\tfrac{\norm{b}}{n\bar a}+2c}\kappa}^2
    = O\prt{\kappa^2+\frac{\kappa^3}{n}}.
\end{align*}}
The linear scaling of $\psi_{n,\kappa}$ in both $n$ and $\kappa$ and the higher order scaling of both $\chi_{n,\kappa}$ and $\theta_{n,\kappa}$ in only $\kappa$ suggest that $\psi_{n,\kappa}$ is tighter for small values of $n$ and large values of $\kappa$, whereas $\theta_{n,\kappa}$ and $\chi_{n,\kappa}$ are tighter otherwise.
The main difference between $\chi_{n,\kappa}$ and $\theta_{n,\kappa}$ lies in a multiplicative factor, constant for the former, and depending of the parameters and the values in $a$ for the latter.
In general, $\chi_{n,\kappa}$ tends to be tighter than $\theta_{n,\kappa}$ as $\kappa$ gets large and $n$ small.
This difference becomes significant in heterogeneous settings, where it can get tighter than $\psi_{n,\kappa}$ as well.
These behaviors are illustrated in Fig.~\ref{fig:Repl:minChange:Mnk:HeterogVsHomog}.

\begin{figure}[h!]
    \centering
    \includegraphics[width=0.45\textwidth,clip = true, trim=1.5cm 10cm 1.8cm 10.25cm,keepaspectratio]{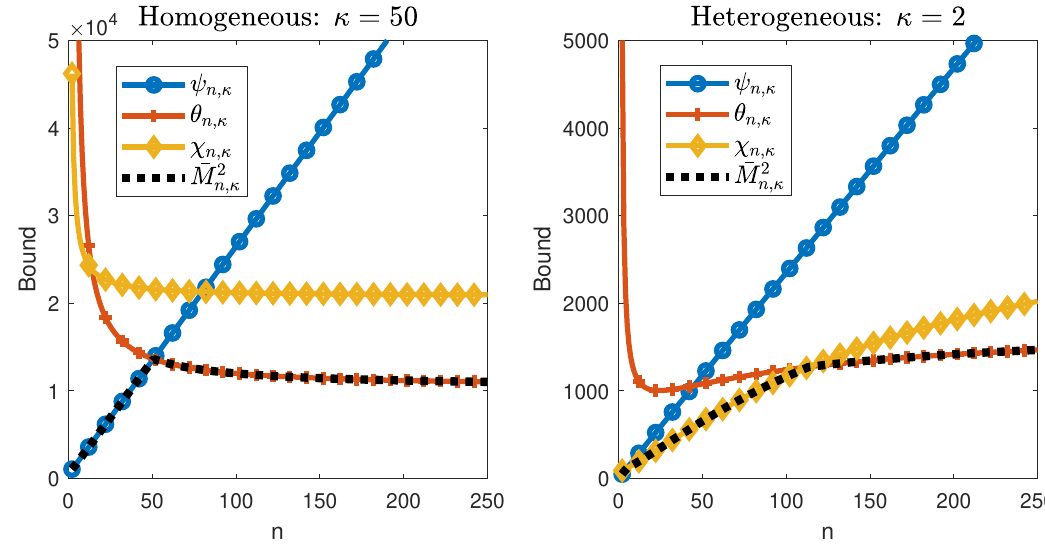}
    \caption{Bounds of Proposition~\ref{prop:Repl:minChange} with respect to the system size $n$ for $b=1$, $c=1$, respectively for $\kappa=50$ with homogeneous agents ($a_i=1$ for all $i$) on the left, and $\kappa=2$ with heterogeneous agents ($a_1=10$, $a_i=1$ for $i>1$) on the right.
    The plots show all three quantities $\psi_{n,\kappa}$, $\theta_{n,\kappa}$ and $\chi_{n,\kappa}$ as well as the final bound $\bar M_{n,\kappa}^2$ for both cases.}
    \label{fig:Repl:minChange:Mnk:HeterogVsHomog}
\end{figure}

\begin{remark}
    \label{rem:minChange:scalings}
    The interpretation of the quantities $\psi_{n,\kappa}$, $\chi_{n,\kappa}$ and $\theta_{n,\kappa}$ actually depends on the implicit assumption that $\norm{b}$ is fixed and $\norm{a}_1$ scales with $n$ (\textit{i.e.}, $\bar a$ is fixed).
    This particular modelling choice is arbitrary, and implies that the solution held by an agent $\optx{k}_i$ becomes smaller for large values of $n$.
    Other choices might have different implications on the interpretation, and in particular on the scaling of these quantities.
    For instance one could choose to either fix $\norm{b}$ and $\norm{a}_1$, or that both $\norm{b}$ and $\norm{a}_1$ scale with $n$, so that the $\optx{k}_i$ remain mostly the same no matter $n$ (observe that the latter choice yields the same scalings than those presented above).
\end{remark}

The result of Proposition~\ref{prop:Repl:minChange} can be analyzed with respect to empirical results derived with the PESTO toolbox \cite{PESTO}, which allows computing exact empirical bounds for quantities related to convex functions. 
A similar analysis was performed in \cite{monnoyer2021random}, and we thus refer to Appendix B of \cite{monnoyer2021random} for details about the PESTO simulation.
For the sake of simplicity, the analysis here is only done for the homogeneous case, and consequently does not involve $\chi_{n,\kappa}$; similar conclusions could however be drawn the same way using heterogeneous agents.

We can observe in Fig.~\ref{fig:PESTOComp_wrtKappa} that even though there is some gap between the theoretical result and that obtained using PESTO, the scaling of the bounds with respect to $n$ and $\kappa$ is well captured.
In particular, the top plot shows that $\bar M^2_{n,\kappa} = \theta_{n,\kappa}$ when $n$ becomes large, resulting in the convergence of $\bar M^2_{n,\kappa}$ towards a constant, consistently with the result obtained with PESTO.
In parallel, the bottom plot suggests that the bounds from PESTO asymptotically grows linearly with $\kappa$, consistently with the evolution of $\psi_{n,\kappa}$, which is the value taken by $\bar M_{n,\kappa}^2$ for large values of $\kappa$.

\begin{figure}
    \centering
    \includegraphics[width=0.5\textwidth,clip = true, trim=1cm 10cm 1cm 10cm,keepaspectratio]{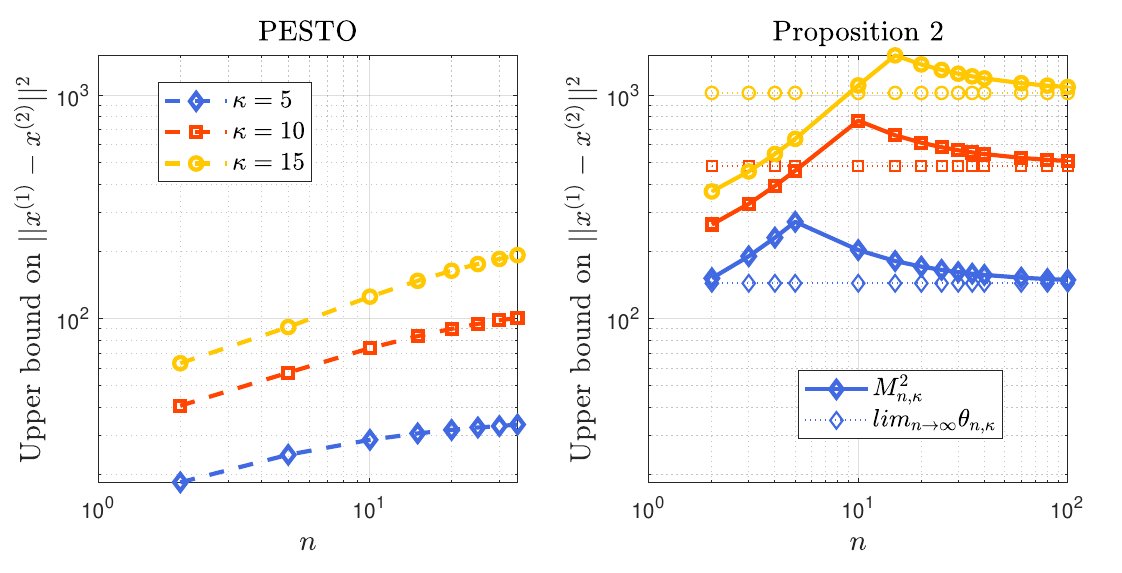}
    \includegraphics[width=0.5\textwidth,clip = true, trim=1cm 10cm 1cm 10.5cm,keepaspectratio]{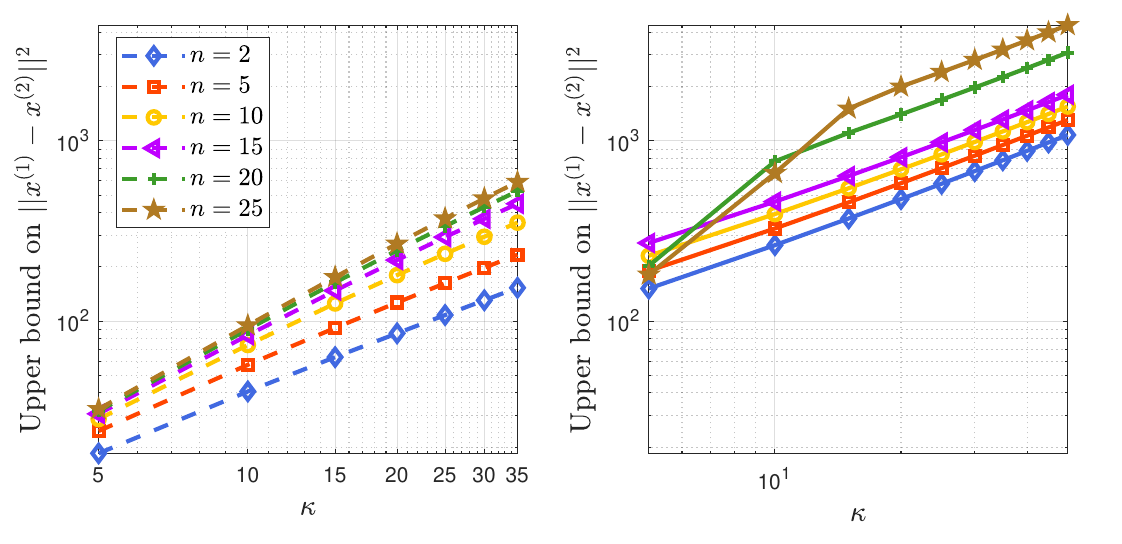}
    \caption{The upper bound \eqref{eq:Bound_Mnk} on $\norm{x^{(1)}-x^{(2)}}^2$ respectively with respect to $n$ with homogeneous agents for several values of $\kappa$ (top) and with respect to $\kappa$ for several values of $n$ (bottom).
    For each plot the bound obtained in Proposition~\ref{prop:Repl:minChange} (right) is compared with the empirical upper bound derived using PESTO in the same settings (left). 
    The top-right plot also shows the asymptotic value expected to be reached by $\theta_{n,\kappa}$ as $n\to\infty$ based on \eqref{eq:prop:Repl:minChange:theta}.}
    \label{fig:PESTOComp_wrtKappa}
\end{figure}

\section{Linear convergence of RCD in closed system}
\label{sec:ClosedConv}

We now analyze the effect of the second type of events happening in the system, \textit{i.e.}, pairwise interactions resulting in RCD updates.
This corresponds to studying the linear convergence of the RCD Algorithm in closed system.

\subsection{Linear convergence and $\Lpbf^\dag$-seminorm}
\label{sec:ClosedConv:LdagNorm}
In this section, we derive the constant of convergence of the RCD algorithm in terms of the distance to the minimizer with the objective of characterizing the effect of a single RCD step on that expected distance at interaction events. 
We introduce the following standard definitions \cite{ortega2000iterative}.

\begin{definition}[Q-Linear Convergence]
\label{def:ClosedConv:LinConv}
    Let $\{x^k\}$ be the sequence of points converging to some point $x^* \in \R^d$ generated by some algorithm. For any norm $\norm{\cdot}$,
    we say the convergence is \emph{Q-linear} if there exists $r\in(0,1)$ such that for all $k$
    $$
    \norm{x^{k+1}-x^*}\le r \norm{x^k-x^*}.
    $$
The number $r$ is called the \textit{constant of convergence}.
\end{definition}

\begin{definition}[R-Linear Convergence]
    Let $\{x^k\}$ be the sequence of points converging to some point $x^* \in \R^d$ generated by some algorithm. For any norm $\norm{\cdot}$,
    we say the convergence is \emph{R-linear}  if there exists $r\in(0,1)$ and some positive constant $C$ such that \linebreak for all $k$
    $$
    \norm{x^{k}-x^*}\le Cr^k.
    $$
\end{definition}
In the rest of the work, we will refer to Q-linear convergence just as linear convergence. R-linear convergence is typically referred as exponential convergence in control systems theory. Clearly, R-linear convergence is weaker than linear convergence since it is concerned with the overall rate of decrease in the error, rather that the decrease over each individual iteration of the algorithm \cite{nocedal2006numerical}.

In \cite{necoara2013random}, the author proves linear convergence of the RCD algorithm in expectation in terms of the function value, \textit{i.e.}, $f(x)-f(x^*)$. 
Hence, from the inequalities corresponding to smooth functions and strong convexity \cite{nesterov2018lectures,MAL-050}, it is straightforward to prove R-linear convergence of the algorithm from \cite[Eq. (26)]{necoara2013random}:
\begin{equation}
\label{eq:ClosedConv:LdagNorm:ExpConv}
    \E\brk{\norm{x^k-x^*}}\le \kappa (1-\alpha\lambda_2)^k\norm{x_0-x^*}.
\end{equation}
However, due to the alternation of updates and replacements, our analysis in open systems requires the strict contraction of some metric after each iteration. 
The linear convergence of the RCD algorithm was established in the preliminary work \cite{monnoyer2021random} for the Euclidean norm under the assumption of a complete communication graph with homogeneous agents and uniform probabilities $p_{ij}$. 
Nevertheless, the following example shows that such contraction no longer holds for the Euclidean norm for general graphs.

\begin{example}
\label{example:noLinConv}
    Consider a line graph with 3 agents satisfying the constraint $\inProd{\mathds{1}}{x}=-3$ with probabilities $p_{12}=0.9$, $p_{23}=0.1$ (and hence $p_{13}=0$), and whose local cost functions and estimates at iteration $k$ are: 
    \begin{center}
        \begin{tabular}{ |c|c|c|c| } 
            \hline
            $i$ & $f_i(x_i)$ & $x_i^*$ & $x_i^k$ \\ 
            \hline
            $1$ & $50(x_1-2)^2$ & $2$ & $10$ \\
            \hline
            $2$ & $20(x_2+2)^2$ & $-2$ & $7$ \\
            \hline
            $3$ & $(x_3+3)^2$ & $-3$ & $-20$ \\ 
            \hline
        \end{tabular}
    \end{center}
    Starting from $x^k$ the expected result of the RCD operation with step-size $h=1/\beta=0.01$ is
    \begin{equation}
        \label{eq:example:noLinConv}
        \Ep{\norm{x^{k+1}-x^*}^2} = 437.204 > 434 = \norm{x^k-x^*}^2,
    \end{equation}
    and hence linear convergence cannot be achieved.
\end{example}

\begin{remark}[Weighted gradient descent]
Notice that the expected behavior of the RCD algorithm \eqref{eq:expected_RCD_algorithm} is linked with the weighted gradient descent, whose convergence has been studied \textit{e.g.}, in \cite{xiao2006optimal,lakshmanan2008decentralized,cherukuri2015distributed}.
Those works, however, do not prove \emph{linear} convergence, which is required for the approach we follow in this paper.
\end{remark}

For this reason, we propose to study the problem in a different norm associated with the algorithm. 
Since the RCD is performed along a network of agents, a natural choice is to consider norms induced by associated matrices as in \cite{wang2021laplacian}.
In this case, we focus on the seminorm induced by the Moore-Penrose inverse of the matrix $\Lpbf$ introduced in \eqref{eq:Statement:RCD:Lpbf}, denoted by $\Lpbf^{\dagger}$, and defined as follows for some $x\in\R^{nd}$:
\begin{equation}
    \label{eq:ClosedConv:LdagNorm:LdagNormDef}
    \LdagNorm{x} := \sqrt{x^\top\Lpbf^{\dag}x}.
\end{equation}
We show with the next proposition that this seminorm is a norm on $\Sao$, where we recall that $\Sao$ is the feasible set defined in \eqref{eq:Statement:Sb} when $b=\mathbf{0}_d$ and corresponds to the kernel of $a^T\otimes I_d$. For the particular case $d=1$, $\Sao$ is the orthogonal complement of $a$.

\begin{proposition}
\label{prop:seminorm_norm}
    The seminorm $\LdagNorm{\cdot}$ is a norm on $\Sao$.
\end{proposition}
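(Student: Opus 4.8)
The plan is to exploit the structure of $\Lpbf = L_p\otimes I_d$: I will show that $\Lpbf$ is symmetric positive semidefinite with kernel exactly $\mathrm{span}(a)\otimes\R^d$, transfer this to $\Lpbf^{\dagger}$, deduce that $\LdagNorm{\cdot}$ is a seminorm whose only zero is attained on $\mathrm{span}(a)\otimes\R^d$, and finally check that this subspace meets $\Sao$ only at the origin.

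\emph{Positive semidefiniteness and kernel.} Each $Q^{ij}$ is symmetric, and restricted to the coordinates $i$ and $j$ it equals $\frac{1}{a_i^2+a_j^2}v_{ij}v_{ij}^\top$ with $v_{ij}=(a_j,-a_i)^\top$, hence is rank-one positive semidefinite; consequently $L_p=\sum_{(i,j)\in\Ecal}p_{ij}Q^{ij}$ and $\Lpbf=L_p\otimes I_d$ are symmetric positive semidefinite. By \eqref{eq:statement:RCD:Qa=La=0} we have $L_pa=\mathbf 0_n$, and since $\Gcal$ is connected we have $\lambda_2>0$ (as recalled after \eqref{eq:statement:RCD:Qa=La=0}, see Lemma 3.3 of \cite{necoara2013random}), so $\ker L_p=\mathrm{span}(a)$ is one-dimensional and therefore $\ker\Lpbf=\mathrm{span}(a)\otimes\R^d$. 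Because $\Lpbf$ is symmetric, $\Lpbf^{\dagger}$ admits the same eigenvectors, inverting the nonzero eigenvalues and leaving the zero ones; hence $\Lpbf^{\dagger}$ is again symmetric positive semidefinite with $\ker\Lpbf^{\dagger}=\ker\Lpbf=\mathrm{span}(a)\otimes\R^d$.

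\emph{Seminorm property and its null set.} Since $\Lpbf^{\dagger}$ is symmetric positive semidefinite, $x\mapsto\LdagNorm{x}$ is a seminorm: absolute homogeneity is immediate, and the triangle inequality follows from the Cauchy--Schwarz inequality applied to the (degenerate) bilinear form $(x,y)\mapsto x^\top\Lpbf^{\dagger}y$. Moreover $\LdagNorm{x}=0$ iff $x^\top\Lpbf^{\dagger}x=0$, which by positive semidefiniteness is equivalent to $\Lpbf^{\dagger}x=\mathbf 0_{nd}$, i.e. to $x\in\ker\Lpbf^{\dagger}=\mathrm{span}(a)\otimes\R^d$. Thus it only remains to verify that $\Sao\cap\prt{\mathrm{span}(a)\otimes\R^d}=\brc{\mathbf 0_{nd}}$: take $x$ in this intersection, so $x=a\otimes w$ for some $w\in\R^d$; the constraint $\prt{a^\top\otimes I_d}x=\mathbf 0_d$ then gives $\prt{a^\top a}\otimes w=\norm{a}^2w=\mathbf 0_d$, and since $\norm{a}^2>0$ (all $a_i>0$) we obtain $w=\mathbf 0_d$, hence $x=\mathbf 0_{nd}$. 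Therefore $x\in\Sao$ with $\LdagNorm{x}=0$ forces $x=\mathbf 0_{nd}$, which together with the seminorm properties shows that $\LdagNorm{\cdot}$ is a norm on $\Sao$.

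The only mildly delicate step is pinning down $\ker\Lpbf$ exactly, which rests on connectedness of $\Gcal$ (equivalently $\lambda_2>0$) and on the Kronecker structure of $\Lpbf$; the rest is routine linear algebra.
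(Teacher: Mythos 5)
Your proof is correct and follows essentially the same route as the paper: show that $\LdagNorm{x}=0$ forces $x\in\ker\Lpbf^{\dagger}=\brc{a\otimes w : w\in\R^d}$, then use $\prt{a^\top\otimes I_d}\prt{a\otimes w}=\norm{a}^2w$ to conclude that this kernel meets $\Sao$ only at the origin. The only difference is that you explicitly justify the kernel characterization (via $\lambda_2>0$ and the Kronecker structure) and the seminorm axioms, which the paper takes as given.
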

\begin{proof}
    By \eqref{eq:ClosedConv:LdagNorm:LdagNormDef}, $\LdagNorm{x}=0$ implies that $x$ must be in the kernel of the matrix $\Lpbf^{\dag}$. From \eqref{eq:Statement:RCD:Lpbf} we have $\Lpbf^{\dag}=L_p^{\dag}\otimes I_d$, and since the kernel of $\Lpbf^{\dag}$ is spanned by all the eigenvectors of $\Lpbf^{\dag}$ corresponding to a zero eigenvalue, we get
    $$
    \mathrm{ker}(\Lpbf^{\dag})=\{ x\in\R^{nd}\, |\, x= a\otimes w, w\in\R^d \}.
    $$ 
    Since $x\in\Sao$, it must satisfy $(a^\top\otimes I_d)x=\mathbf{0}_{nd}$ and we have:
    $$
    \prt{a^\top\otimes I_d}x=\prt{a^\top\otimes I_d}\prt{a\otimes w}=\norm{a}^2w,
    $$
    which is equal to $\mathbf{0}_d$ only for $w=\mathbf{0}_d$.
\end{proof}

If $x,y \in \Sab$, then $z=x-y$ belongs to $\Sao$, so that the norm $\LdagNorm{\cdot}$ can be used to measure the distance between two vectors in the context of this work.

\subsection{Contraction of an iteration in closed system}
\label{sec:ClosedConv:General}
Let us remind the update rule of the RCD algorithm defined in \eqref{eq:Statement:RCD:UpdateRule} for some positive step-size $h$ as
\begin{equation}
    \label{eq:ClosedConv:rule}
    x^+ = x - hQ^{ij}\nabla f(x).
\end{equation}
In the following proposition, we analyze the convergence of \eqref{eq:ClosedConv:rule} with respect to the norm induced by $\Lpbf^\dag$ defined in the previous section.

\begin{proposition}
\label{prop:ClosedConv:kLRate}
    Let a function $f(x):= \sum_{i=1}^n f_i(x_i)$ and $x^*:= \argmin_{x\in \Sab} f(x)$.
    Under Assumption~\ref{Ass:Statement:Fab}, for any positive scalar 
    \begin{equation}
    \label{eq:prop:ClosedConv:kLRate:h}
        h\leq \frac{\lambda_2}{\lambda_n}\frac{2}{\alpha+\beta},    
    \end{equation}
    and for any initial point $x\in \Sab$, then the update rule \eqref{eq:ClosedConv:rule}
    applied on the randomly selected pair of agents $(i,j)\in \Ecal$ satisfies
    \vspace{-0.1cm}
    \begin{equation}
        \label{eq:prop:ClosedConv:kLRate}
        \E\brk{\LdagNorm{x^+-x^*}^2} 
        \leq \prt{1-2h\alpha\lambda_2+h^2\alpha^2\lambda_n}\LdagNorm{x-x^*}^2.
    \end{equation}
\end{proposition}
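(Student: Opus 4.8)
The plan is to expand the squared $\LdagNorm{\cdot}$-distance after one RCD step, take the expectation over the random edge, and then compress the result with the two standard first-order inequalities available for functions in $\Fab$, calibrated to the step-size bound~\eqref{eq:prop:ClosedConv:kLRate:h}. First I would \emph{centre} the step: by the optimality condition~\eqref{eq:Statement:Optimality}, $\nabla f(x^*)=a\otimes\lambda^*$, and by~\eqref{eq:statement:RCD:Qa=La=0}, $\Qbf^{ij}(a\otimes\lambda^*)=(Q^{ij}a)\otimes\lambda^*=\mathbf 0_{nd}$, so that with $u:=x-x^*$ and $g:=\nabla f(x)-\nabla f(x^*)$ the update becomes $x^+-x^*=u-h\,\Qbf^{ij}g$. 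Since $\Qbf^{ij}$ and $\Lpbf^\dag$ are symmetric, expanding the seminorm and averaging over the edge (using $\E[\Qbf^{ij}]=\Lpbf$ from~\eqref{eq:Statement:RCD:Lpbf}) gives
\[
 \Ep{\LdagNorm{x^+-x^*}^2}=\LdagNorm{u}^2-2h\,u^\top\Lpbf^\dag\Lpbf\,g+h^2\,\Ep{\LdagNorm{\Qbf^{ij}g}^2}.
\]
The structural observation I would use next is that, $\Lpbf$ being symmetric, $\Lpbf^\dag\Lpbf$ is the orthogonal projector onto $\mathrm{range}(\Lpbf)=\Sao$ (its kernel being $\{a\otimes w:w\in\R^d\}$, exactly as in the proof of Proposition~\ref{prop:seminorm_norm}); since $x,x^*\in\Sab$ we have $u\in\Sao$, hence $\Lpbf^\dag\Lpbf\,u=u$, and the cross term collapses to $-2h\inProd{u}{g}$.

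For the second-order term I would exploit that each $\Qbf^{ij}=Q^{ij}\otimes I_d$ is an orthogonal projector (the nontrivial $2\times2$ block of $Q^{ij}$ is $vv^\top/\norm{v}^2$ with $v=(a_j,-a_i)$) whose range lies in $\Sao$ by~\eqref{eq:statement:RCD:Qa=La=0}. Combined with $\Lpbf^\dag\preceq\lambda_2^{-1}(\Lpbf^\dag\Lpbf)$, this gives $\LdagNorm{\Qbf^{ij}v}^2\le\lambda_2^{-1}\norm{\Qbf^{ij}v}^2=\lambda_2^{-1}v^\top\Qbf^{ij}v$ (idempotence), so that averaging yields $\Ep{\LdagNorm{\Qbf^{ij}g}^2}\le\lambda_2^{-1}g^\top\Lpbf g\le(\lambda_n/\lambda_2)\norm{g}^2$. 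At this point the running bound reads $\LdagNorm{u}^2-2h\inProd{u}{g}+(h^2\lambda_n/\lambda_2)\norm{g}^2$.

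The final step is the delicate one. A crude estimate $\norm{g}^2\le\beta^2\norm{u}^2$ here overshoots the target by a factor $\kappa$, so the negative cross term must be spent precisely. I would split $-2h\inProd{u}{g}=-(2-\epsilon)h\inProd{u}{g}-\epsilon h\inProd{u}{g}$ with $\epsilon:=h\lambda_n(\alpha+\beta)/\lambda_2$, which satisfies $\epsilon\le2$ exactly under~\eqref{eq:prop:ClosedConv:kLRate:h}; then apply the standard inequality for $\alpha$-strongly convex $\beta$-smooth functions \cite{nesterov2018lectures}, $\inProd{\nabla f(x)-\nabla f(y)}{x-y}\ge\tfrac{\alpha\beta}{\alpha+\beta}\norm{x-y}^2+\tfrac{1}{\alpha+\beta}\norm{\nabla f(x)-\nabla f(y)}^2$, to the $\epsilon$-part, so that by the choice of $\epsilon$ the $\norm{g}^2$ contributions cancel exactly; apply strong monotonicity $\inProd{u}{g}\ge\alpha\norm{u}^2$ to the $(2-\epsilon)$-part, after which the $\norm{u}^2$ coefficient simplifies to $-h\alpha\big(2-h\lambda_n\alpha/\lambda_2\big)$; and conclude with $\norm{u}^2\ge\lambda_2\LdagNorm{u}^2$ (valid on $\Sao$, where $\Lpbf^\dag\preceq\lambda_2^{-1}I$), landing exactly on $\big(1-2h\alpha\lambda_2+h^2\alpha^2\lambda_n\big)\LdagNorm{u}^2$.

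I expect this last balancing to be the main obstacle: the second-order term unavoidably carries a $\lambda_2^{-1}$, and $\norm{g}^2$ is only controlled by $\beta^2\norm{u}^2$, so bounding it in isolation cannot give the stated rate. Recovering the rate requires using \emph{both} first-order inequalities simultaneously — the refined co-coercivity one to absorb $\norm{g}^2$ and strong monotonicity for the remainder — with the split parameter $\epsilon$ chosen so the cancellation is exact precisely at the step-size threshold; verifying $\epsilon\le2$ and the nonnegativity of the residual coefficient (both equivalent to~\eqref{eq:prop:ClosedConv:kLRate:h}) is then routine.
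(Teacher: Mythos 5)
Your proposal is correct and follows essentially the same route as the paper's proof: centering via $\Qbf^{ij}\nabla f(x^*)=\mathbf 0_{nd}$, the eigenvalue bounds $\lambda_2\LdagNorm{z}^2\le\norm{z}^2$ and $\norm{z}^2_{\Lpbf}\le\lambda_n\norm{z}^2$ for the second-order term, and the combined use of co-coercivity and strong convexity with the step-size condition serving exactly to neutralize the $\norm{\nabla f(x)-\nabla f(x^*)}^2$ contribution. The only difference is bookkeeping at the end — the paper applies co-coercivity to the full cross term and then absorbs the resulting nonpositive $\norm{g}^2$ coefficient via $\norm{g}^2\ge\alpha^2\lambda_2\LdagNorm{u}^2$, while you split the cross term with a tuned parameter $\epsilon$ so the $\norm{g}^2$ terms cancel exactly — and both yield the identical final coefficient $1-2h\alpha\lambda_2+h^2\alpha^2\lambda_n$.
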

\begin{proof}
    By definition:
    \begin{align}
        \!\!\!\!\E\brk{\LdagNorm{x^+-x^*}^2} 
        &= \sum_{(i,j)\in\Ecal}p_{ij}\LdagNorm{x-h\Qbf^{ij}\nabla f(x)-x^*}^2\nonumber\\
        &= \LdagNorm{x-x^*}^2 + h^2\sum_{(i,j)\in\Ecal}p_{ij}\LdagNorm{\Qbf^{ij}\nabla f(x)}^2\nonumber\\
        \label{eq:proof:prop:ClosedConv:kLRate:Start}
        &\quad - 2h\!\sum_{(i,j)\in\Ecal}p_{ij}\inProd{\Qbf^{ij}\nabla f(x)}{L_p^\dag(x-x^*)}.
    \end{align}
    
    \noindent Since $\Lpbf = \sum_{(i,j)\in\Ecal}p_{ij}\Qbf^{ij}$, it follows that
    \begin{align}
        \label{eq:proof:prop:ClosedConv:ExpandECk1}
        \E\brk{\LdagNorm{x^+-x^*}^2} 
        &= \LdagNorm{x-x^*}^2 + h^2\sum_{(i,j)\in\Ecal}p_{ij}\LdagNorm{\Qbf^{ij}\nabla f(x)}^2\nonumber\\
        &\ \ \ - 2h\inProd{\Lpbf\nabla f(x)}{\Lpbf^\dag(x-x^*)}.
    \end{align}
    
    \noindent We first treat the second term of the right-hand side of \eqref{eq:proof:prop:ClosedConv:ExpandECk1}.
    Remember from \eqref{eq:statement:RCD:Qa=La=0} 
    that $Q^{ij}a=\mathbf{0}_n$, and from \eqref{eq:Statement:Optimality} that $\nabla f(x^*) = a\otimes\lambda^*$ for some $\lambda^*\in\R^d$.
    Hence, since $\Qbf^{ij}=Q^{ij}\otimes I_d$ by definition:
    \begin{align}
        \label{eq:proof:prop:ClosedConv:kLRate:Qijnablaf(x^*)=0}
        \Qbf^{ij}\nabla f(x^*) 
        &= (Q^{ij}\otimes I_d)(a\otimes \lambda^*)
        = (Q^{ij}a)\otimes \lambda^* 
        = \mathbf{0}_{nd}.
    \end{align}
    It thus follows that
    \begin{align}
        \LdagNorm{\Qbf^{ij}\nabla f(x)}^2
        &= \LdagNorm{\Qbf^{ij}(\nabla f(x)-\nabla f(x^*))}^2\nonumber\\
        \label{eq:proof:prop:ClosedConv:kLRate:UBquadTermInter}
        &\leq \frac{1}{\lambda_2}\norm{\Qbf^{ij}(\nabla f(x)-\nabla f(x^*))}^2,
    \end{align}
    where the inequality follows from the fact that the eigenvalues of $\Lpbf$ are exactly those of $L_p$ repeated $d$ times (by Theorem~13.12 of \cite{KronEig}), so that the smallest and largest nonzero eigenvalues of $\Lpbf^\dag$ are respectively $1/\lambda_n$ and $1/\lambda_2$, yielding for all $z\in\R^{nd}$:
    \begin{equation}
        \label{eq:proof:prop:ClosedConv:kLRate:LpdagNorm(z)<=Norm(z)/l2}
        \LdagNorm{z}^2\leq \frac{1}{\lambda_2}\norm{z}^2.
    \end{equation} 
    Therefore, since $\Qbf^{ij} = (\Qbf^{ij})^\top=(\Qbf^{ij})^2$, and using the fact that $\norm{z}_{\Lpbf}^2\leq\lambda_n\norm{z}^2$ for all $z\in\R^{nd}$, it follows  
    from \eqref{eq:proof:prop:ClosedConv:kLRate:UBquadTermInter}:
    \begin{align}
        \sum_{(i,j)\in\Ecal}p_{ij}\LdagNorm{\Qbf^{ij}\nabla f(x)}^2\nonumber
        &\leq \frac{1}{\lambda_2}\norm{\nabla f(x)-\nabla f(x^*)}_{\Lpbf}^2\\
        \label{eq:proof:prop:ClosedConv:kLRate:UBquadTerm}
        &\leq \frac{\lambda_n}{\lambda_2}\norm{\nabla f(x)-\nabla f(x^*)}^2.
    \end{align}
    
    \noindent We now analyze the third term of the right-hand side of \eqref{eq:proof:prop:ClosedConv:ExpandECk1}.
    From \eqref{eq:proof:prop:ClosedConv:kLRate:Qijnablaf(x^*)=0} we get
    \begin{align}
        \label{eq:proof:prop:ClosedConv:kLRate:Lpnablaf(x^*)=0}
        \Lpbf\nabla f(x^*) = \sum_{(i,j)\in\Ecal}p_{ij}\Qbf^{ij}\nabla f(x^*) = \mathbf{0}_{nd},
    \end{align}
    yielding \cite[Thm. 2.1.12]{nesterov2018lectures}:
    
    \vspace{-0.5cm}
    \begin{small}
        \begin{align*}
            \inProd{\nabla f(x)-\nabla f(x^*)}{x-x^*} \geq     \frac{\beta^{-1}\norm{\nabla f(x)-\nabla f(x^*)}^2}{1+\kappa^{-1}} + \frac{\alpha\norm{x-x^*}^2}{1+\kappa^{-1}}.
        \end{align*}
    \end{small}
    \vspace{-0.5cm}
    Hence, using the result above and \eqref{eq:proof:prop:ClosedConv:kLRate:LpdagNorm(z)<=Norm(z)/l2}, it follows that
    
    \begin{align}
        &-2h\inProd{\Lpbf\nabla f(x)}{\Lpbf^\dag(x-x^*)}\nonumber\\
        \label{eq:proof:prop:ClosedConv:kLRate:UBcrossTerm}
        &\leq -2h\frac{\beta^{-1}\norm{\nabla f(x)-\nabla f(x^*)}^2}{1+\kappa^{-1}} - 2h\frac{\alpha\lambda_2\LdagNorm{x-x^*}^2}{1+\kappa^{-1}}.
    \end{align}
    Injecting \eqref{eq:proof:prop:ClosedConv:kLRate:UBquadTerm} and \eqref{eq:proof:prop:ClosedConv:kLRate:UBcrossTerm} into \eqref{eq:proof:prop:ClosedConv:kLRate:Start} yields
    \begin{align}
    \label{eq:proof:prop:ClosedConv:kl:ALTERNATIVE}
        \E\brk{\LdagNorm{x^+-x^*}^2} 
        &\leq \prt{1-2h\frac{\alpha\lambda_2}{1+\kappa^{-1}}}\LdagNorm{x-x^*}^2\\
        &+\prt{h^2\frac{\lambda_n}{\lambda_2}-2h\frac{\beta^{-1}}{1+\kappa^{-1}}}\norm{\nabla f(x)-\nabla f(x^*)}^2.\nonumber
    \end{align}
    Observe that if $h\leq \frac{\lambda_2}{\lambda_n}\frac{2}{\alpha+\beta}$ then $h^2\frac{\lambda_n}{\lambda_2}-2h\frac{\beta^{-1}}{1+\kappa^{-1}}\leq0$.
    As a consequence, \eqref{eq:proof:prop:ClosedConv:kl:ALTERNATIVE} can be upper bounded using the definition of $\alpha$-strongly convex functions, and more specifically 
    \begin{align*}
        \norm{\nabla f(x)-\nabla f(x^*)}^2
        \geq \alpha^2\lambda_2\LdagNorm{x-x^*}^2,
    \end{align*}
    where we used the fact that $\norm{x-x^*}^2\geq\lambda_2\LdagNorm{x-x^*}^2$.
    It follows that 
    $$
    \E\brk{\LdagNorm{x^+-x^*}^2}\leq \LdagNorm{x-x^*}^2+(h^2\alpha^2\lambda_n-2h\alpha\lambda_2)\LdagNorm{x-x^*}^2,
    $$ 
    which concludes the proof.
\end{proof}

It is clear that the constant of convergence is less than one if $h\leq \frac{2\lambda_2}{\alpha\lambda_n}$, which is thus satisfied on all its range of validity since $h\leq \frac{2\lambda_2}{(\alpha+\beta)\lambda_n} \leq \frac{2\lambda_2}{\alpha\lambda_n}$.
We can then find the step-size which minimizes \eqref{eq:prop:ClosedConv:kLRate} and the corresponding constant of convergence.

\begin{cor}
\label{cor:ClosedConv:OptimalConvRate}
    The optimal constant of convergence in \eqref{eq:prop:ClosedConv:kLRate} under \eqref{eq:prop:ClosedConv:kLRate:h} is achieved for $h^* = \frac{2\lambda_2}{(\alpha+\beta)\lambda_n}$ which yields
    \begin{equation}
    \label{eq:cor:ClosedConv:OptimalConvRate}
        \E\brk{\LdagNorm{x^+-x^*}^2}\leq \prt{1-\frac{\lambda_2^2}{\lambda_n}\frac{1}{\kappa}}\LdagNorm{x-x^*}^2.
    \end{equation}
\end{cor}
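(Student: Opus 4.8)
The plan is to treat the coefficient $g(h) := 1 - 2h\alpha\lambda_2 + h^2\alpha^2\lambda_n$ appearing on the right-hand side of \eqref{eq:prop:ClosedConv:kLRate} as a function of the step-size $h$ and minimize it over the admissible interval $\left(0,\frac{2\lambda_2}{(\alpha+\beta)\lambda_n}\right]$ dictated by \eqref{eq:prop:ClosedConv:kLRate:h}. Since the leading coefficient $\alpha^2\lambda_n$ is strictly positive, $g$ is a strictly convex parabola in $h$, whose unconstrained minimizer is $\tilde h := \frac{\lambda_2}{\alpha\lambda_n}$.

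First I would compare $\tilde h$ with the upper end of the admissible range: one has $\tilde h \ge \frac{2\lambda_2}{(\alpha+\beta)\lambda_n}$ precisely when $\alpha+\beta \ge 2\alpha$, i.e.\ $\beta \ge \alpha$, which holds under Assumption~\ref{Ass:Statement:Fab} since $\kappa = \beta/\alpha \ge 1$. Hence the vertex of the parabola lies (weakly) to the right of the feasible interval, so $g$ is non-increasing on all of $\left(0,\frac{2\lambda_2}{(\alpha+\beta)\lambda_n}\right]$, and its constrained minimum is attained at the right endpoint $h^* = \frac{2\lambda_2}{(\alpha+\beta)\lambda_n}$, which is exactly the step-size announced in the statement.

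It then remains to substitute $h=h^*$ into $g$. Collecting the two $h$-dependent terms gives $g(h^*) = 1 - \frac{4\alpha\lambda_2^2}{(\alpha+\beta)\lambda_n}\bigl(1-\frac{\alpha}{\alpha+\beta}\bigr) = 1 - \frac{4\alpha\beta\,\lambda_2^2}{(\alpha+\beta)^2\lambda_n}$. To reach the clean form in \eqref{eq:cor:ClosedConv:OptimalConvRate} I would relax this bound slightly: using $\beta\ge\alpha$ once more one has $4\beta^2 \ge (\alpha+\beta)^2$, hence $\frac{4\alpha\beta}{(\alpha+\beta)^2} \ge \frac{\alpha}{\beta} = \frac{1}{\kappa}$, so that $g(h^*) \le 1 - \frac{\lambda_2^2}{\lambda_n}\frac{1}{\kappa}$. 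Plugging this into \eqref{eq:prop:ClosedConv:kLRate} (specialised to $h=h^*$) yields the claim.

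The computations are elementary; the only points requiring care are (i) recognising that the optimal admissible step-size sits on the boundary of \eqref{eq:prop:ClosedConv:kLRate:h} rather than at the parabola's vertex — this is exactly where the hypothesis $\kappa \ge 1$ enters — and (ii) the final relaxation trading the exact value $\frac{4\alpha\beta}{(\alpha+\beta)^2}$ for the more transparent $1/\kappa$, again justified by $\beta \ge \alpha$. No genuine obstacle is expected.
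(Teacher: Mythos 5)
Your proposal is correct and follows exactly the argument the paper implies (the corollary is stated without proof, following the remark that one ``can then find the step-size which minimizes'' the rate): minimize the convex quadratic $1-2h\alpha\lambda_2+h^2\alpha^2\lambda_n$ over the admissible range, observe via $\kappa\ge 1$ that its vertex $\frac{\lambda_2}{\alpha\lambda_n}$ lies at or beyond the right endpoint so the constrained optimum is $h^*=\frac{2\lambda_2}{(\alpha+\beta)\lambda_n}$, and relax the exact value $1-\frac{4\alpha\beta\lambda_2^2}{(\alpha+\beta)^2\lambda_n}$ to $1-\frac{\lambda_2^2}{\lambda_n}\frac{1}{\kappa}$ using $4\beta^2\ge(\alpha+\beta)^2$. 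Both steps check out; no gaps.
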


Interestingly, Proposition~\ref{prop:ClosedConv:kLRate} shows that linear convergence can be achieved by the RCD algorithm with respect to the norm induced by $\Lpbf^\dag$ with a constant of convergence similar to that of classical algorithms based on gradient descent \cite{nesterov2018lectures,MAL-050}.

\begin{remark}[Complete graph]
    For the particular case of a complete graph with 1-dimensional homogeneous agents and uniform probabilities, the eigenvalues of $L_p$ are $\lambda_2=\lambda_n=\frac{1}{n-1}$ and the $\Lpbf^{\dagger}$-norm coincides with the Euclidean norm for all $z=x-y$, where $x,y\in\Sab$.
    Then, the result of Proposition~\ref{prop:ClosedConv:kLRate} becomes 
    $$\Ep{\norm{x^+-x^*}^2} \leq\prt{1-\frac{\alpha h}{n-1}(2-\alpha h)}\norm{x-x^*}^2.$$
    Since in that case by definition $h\leq\frac{2}{\alpha+\beta}\leq\frac1\alpha$, it follows that 
    $$\Ep{\norm{x^+-x^*}^2}\leq \prt{1-\frac{\alpha h}{n-1}}\norm{x-x^*}^2,$$
    which coincides with \cite[Eq. (13)]{monnoyer2021random}.
\end{remark}

\begin{remark}[Alternative rate]
\label{rem:ClosedConv:AltStepSize}
Starting from \eqref{eq:proof:prop:ClosedConv:kl:ALTERNATIVE} in the proof of Proposition~\ref{prop:ClosedConv:kLRate}, one can use a similar argument to derive the following alternative constant of convergence, valid for $\frac{\lambda_2}{\lambda_n}\frac{2}{\alpha+\beta}\leq h\leq \frac{\kappa^{-1}+\kappa_L}{\kappa^{-1}+1}\frac{\lambda_2}{\lambda_n^2\beta}$, with $\kappa_L=\frac{\lambda_n}{\lambda_2}$:
\begin{equation}
    \E\brk{\LdagNorm{x^+-x^*}^2} 
        \leq \prt{1-2\beta\lambda_2h+h^2\beta^2\frac{\lambda_n^2}{\lambda_2}}\LdagNorm{x-x^*}^2.    
\end{equation}
This result could be used in the rest of this work the same way as that of Proposition~\ref{prop:ClosedConv:kLRate} for the corresponding step-size.
This development is however omitted in this work.
\end{remark}

\subsection{Homogeneous agents and uniform probabilities}
\label{sec:ClosedConv:HomogeneousUniform}

\begin{assumption}\label{ass:homogeneous_agents}
The agents are homogeneous (\textit{i.e.}, $a = \mathds 1_n$) and the probabilities of selecting the edges during the implementation of the RCD algorithm are uniform (\textit{i.e.}, $p_{ij}=p$).
\end{assumption}

For the particular case of homogeneous agents and uniform probabilities, the matrix $L_p$ can be expressed as $L_p=\frac{p}{2}L$ where $L$ is the usual Laplacian matrix. 
In this case, the matrix $L_p$ can be associated to an electrical circuit \cite{dorfler2018electrical}, and we can use the concept of effective resistance to find an upper bound for the step size of the algorithm independently of $\lambda_2$.

Hence, the following proposition provides an alternative bound for the convergence of the RCD algorithm in the specific case described above, and can be used the same way as that of Proposition~\ref{prop:ClosedConv:kLRate} in the remainder of this work for that case.
However, for the sake of generality, we express the main result in the next section only in terms of Proposition~\ref{prop:ClosedConv:kLRate}.

\begin{proposition}
\label{prop:effective_resistance}
    Let a function $f(x):= \sum_{i=1}^n f_i(x_i)$ and $x^*:= \argmin_{x\in S_b} f(x)$.
    Under Assumptions~\ref{Ass:Statement:Fab} and \ref{ass:homogeneous_agents}, for any positive scalar
    \begin{equation}\label{eq:step_size_homogeneous}
    h\leq \frac{2p}{\lambda_n}\frac{2}{\alpha+\beta},    
    \end{equation}
    and for any initial point $x\in \Sab$, then the update rule \eqref{eq:ClosedConv:rule}
    applied on the randomly selected pair of agents $(i,j)\in \Ecal$ satisfies
    \vspace{-0.1cm}
    \begin{equation}
        \label{eq:prop:ClosedConv:kLRate_resistance}
        \E\brk{\LdagNorm{x^+-x^*}^2} 
        \leq \prt{1-2h\alpha\lambda_2+\frac{h^2\alpha^2\lambda_2\lambda_n}{2p}}\LdagNorm{x-x^*}^2.
    \end{equation}
\end{proposition}
\begin{proof}
    Since the matrices $\Qbf^{ij}$ are idempotent, the summation term of the second element in \eqref{eq:proof:prop:ClosedConv:ExpandECk1} can be expressed as:
    $$
    \LdagNorm{\Qbf^{ij}\nabla f(x)}^2=\inProd{\Qbf^{ij}\nabla f(x)}{\Qbf^{ij}\Lpbf^\dag \Qbf^{ij} \Qbf^{ij}\nabla f(x)}
    $$
    Then we can use an upper bound for the quadratic form and we obtain for each term:
    $$
    \LdagNorm{\Qbf^{ij}\nabla f(x)}^2\le\norm{\Qbf^{ij}\nabla f(x)}^2\lambda_{\max}(\Qbf^{ij}\Lpbf^\dag \Qbf^{ij}).
    $$
    Now, the matrix $\Qbf^{ij}\Lpbf^\dag \Qbf^{ij}$ is given by:
    $$
    \Qbf^{ij}\Lpbf^\dag \Qbf^{ij}=\prt{Q^{ij}L_p^\dag Q^{ij}}\otimes I_d,
    $$
    which implies that $\lambda_{\max}(\Qbf^{ij}\Lpbf^\dag \Qbf^{ij})=\lambda_{\max}(Q^{ij}L_p^\dag Q^{ij})$. Then, we have:
    $$
    Q^{ij}\invL Q^{ij}=\frac{1}{2}\prt{[L_P^\dag]_{ii}+[L_P^\dag]_{jj}-2[L_P^\dag]_{ij}}Q^{ij}=\frac{1}{2}r_{ij}Q^{ij},
    $$
    where $r_{ij}$ is the effective resistance between the agents $i$ and $j$. Since there is an edge between $i$ and $j$, we have $r_{ij}\le 1/p$. Then we have the following upper bound for the largest eigenvalue:
    \begin{equation}
    \label{eq:effective_resistance}
        \lambda_{\max}({Q^{ij}\invL Q^{ij}})\le \frac{1}{2p} \; \text{for all } (i,j)\in E,    
    \end{equation}
    and we get:
    \begin{align}
        \sum_{(i,j)\in\Ecal}p_{ij}\LdagNorm{\Qbf^{ij}\nabla f(x)}^2
        \nonumber
        &\leq \frac{1}{2p}\norm{\nabla f(x)-\nabla f(x^*)}_{\Lpbf}^2\\
        \label{eq:proof:prop:ClosedConv:kLRate:UBquadTerm2}
        &\leq \frac{\lambda_n}{2p}\norm{\nabla f(x)-\nabla f(x^*)}^2,
    \end{align}
    which replaces \eqref{eq:proof:prop:ClosedConv:kLRate:UBquadTerm}.
    The rest of the proof follows the same steps as in the proof of Proposition~\ref{prop:ClosedConv:kLRate}.
\end{proof}

Similarly to Proposition~\ref{prop:ClosedConv:kLRate}, the constant of convergence is strictly decreasing if $h\leq \frac{4p}{\alpha\lambda_n}$, which is 
always satisfied since $h\leq \frac{4p}{(\alpha+\beta)\lambda_n} \leq \frac{4p}{\alpha\lambda_n}$.
Hence, we can also find the optimal step-size for the algorithm, and the corresponding constant of convergence.

\begin{cor}
\label{cor:ClosedConv:OptimalConvRate:effectiveResistance}
    The optimal constant of convergence in \eqref{eq:prop:ClosedConv:kLRate_resistance} under \eqref{eq:step_size_homogeneous} is achieved for $h^* = \frac{4p}{(\alpha+\beta)\lambda_n}$ which yields
    \begin{equation}
    \label{eq:cor:ClosedConv:OptimalConvRate:effectiveResistance}
        \E\brk{\LdagNorm{x^+-x^*}^2}\leq \prt{1-\frac{2p\lambda_2}{\lambda_n}\frac{1}{\kappa}}\LdagNorm{x-x^*}^2.
    \end{equation}
\end{cor}

\begin{remark}
The upper bound for the step size derived in Proposition~\ref{prop:effective_resistance} is better suited for graphs with a small $\lambda_2$ (also known as \emph{algebraic connectivity}), that is, non-robust networks that can be easily disconnected \cite{jamakovic2008robustness}.
If we denote by $\mu$ the eigenvalues of $L$, which satisfy $\lambda=\frac{p}{2}\mu$, then we have that for Proposition~\ref{prop:ClosedConv:kLRate} the step size must satisfy $h\le \frac{\mu_2}{\mu_n}\frac{2}{\alpha+\beta}$ while for Proposition~\ref{prop:effective_resistance} the step size is upper bounded by $h\le \frac{4}{\mu_n}\frac{2}{\alpha+\beta}$. 
\end{remark}

\section{Convergence of RCD in open system}
\label{Sec:OpenSystems}

Let us define the ratio
\begin{equation}
    \label{eq:OpenSystems:rhoR}
    \rho_R := \frac{1-p_U}{p_U},
\end{equation}
which characterizes the expected number of replacements happening in the system between two consecutive RCD updates.
In particular, when $\rho_R\to0$, then the system converges to a closed system, and when $\rho_R\to\infty$, then replacements become so frequent that RCD updates are negligible.

In the following theorem we present the main result of this work, in which we derive the constant of convergence for the error achieved by the RCD algorithm in expectation in open system, under the form of an upper bound on that error.
The derivation of this result relies on the separate analysis of the effects of replacements and of RCD updates, which is enabled by Assumption~\ref{Ass:Statement:Indep}.

\begin{theorem}
\label{thm:OpenSystems:EConvRate_TAC}
    Let $M_{n,\kappa} = \frac{1}{\lambda_2}\bar M_{n,\kappa}$ with $\bar M_{n,\kappa}$ defined in \eqref{eq:Bound_Mnk}, and let $\bar\eta := \rho_R \frac{M_{n,\kappa}}{\alpha h(2\lambda_2-\lambda_n\alpha h)}$.
    In the setting described in Section~\ref{sec:Statement}, 
    the sequence of estimates $x^k$ generated by the iteration rule \eqref{eq:ClosedConv:rule} with $h\leq\frac{\lambda_2}{\lambda_n}\frac{2}{\alpha+\beta}$ satisfies for all $k$ and for any $\eta>\bar\eta$:
    \begin{align}
        \label{thm:OpenSystems:EConvRate}
        \Ep{\LdagNorm{x^{k+1}-\optx{k+1}}^2}-\Gamma_{\eta} \leq A_\eta\prt{\Ep{\LdagNorm{x^k-\optx{k}}^2}-\Gamma_\eta},
    \end{align}
    with
    \begin{align}
        \label{thm:OpenSystems:EConvRate:Aeta}
        A_\eta &:= 1-p_U\alpha h(2\lambda_2 - \alpha\lambda_nh) + (1-p_U)\frac{M_{n,\kappa}}{\eta};\\
        \label{thm:OpenSystems:EConvRate:Gammaeta}
        \Gamma_\eta &:=
        \frac{(1-p_U)M_{n,\kappa}(\eta+M_{n,\kappa})\eta}{p_U\eta\alpha h(2\lambda_2 - \alpha\lambda_nh) - (1-p_U)M_{n,\kappa}}.
    \end{align}
\end{theorem}
\begin{proof}
    Let us denote $ C^k := \LdagNorm{x^k-\optx{k}}^2$.
    From Assumption~\ref{Ass:Statement:Indep}, we have
    \begin{equation}
    \label{eq:proof:Thm:OpenSystems:ConvRate:Start}
        \Ep{C^{k+1}}
        = p_U\Ep{C^{k+1}|U} + (1-p_U)\Ep{C^{k+1}|R},
    \end{equation}
    where $U$ and $R$ respectively stand for the occurrence of an RCD update and a replacement event.
    Proposition~\ref{prop:ClosedConv:kLRate} then yields for $h\leq\frac{\lambda_2}{\lambda_n}\frac{2}{\alpha+\beta}$
    \begin{equation}
    \label{eq:proof:Thm:OpenSystems:ConvRate:U}
        \Ep{C^{k+1}|U} 
        \leq \prt{1 - 2\lambda_2\alpha h + \lambda_n\alpha^2h^2}\Ep{C^k}.
    \end{equation}
    Under a replacement event, we have
    $x^{k+1}=x^k$, and hence Proposition~\ref{prop:Repl:minChange} yields
    \begin{align}
        \Ep{C^{k+1}|R} 
        &= \E\left[\LdagNorm{x^{k+1}-x^{k,*}+x^{k,*}-x^{k+1,*}}^2\right]\nonumber\\
        &\leq \E\left[\prt{\LdagNorm{x^{k}-x^{k,*}}+\LdagNorm{x^{k,*}-x^{k+1,*}}}^2\right]\nonumber\\
        \label{eq:proof:Thm:OpenSystems:ConvRate:R}
        &= \E\left[\prt{\sqrt{C^k}+M_{n,\kappa}}^2\right].
    \end{align}
    Injecting \eqref{eq:proof:Thm:OpenSystems:ConvRate:U} and \eqref{eq:proof:Thm:OpenSystems:ConvRate:R} into \eqref{eq:proof:Thm:OpenSystems:ConvRate:Start} then yields the following nonlinear recurrence:
    \begin{align}
    \label{eq:proof:Thm:OpenSystems:ConvRate:nonlinrec}
        \Ep{C^{k+1}}
        &\leq \prt{1 - p_U\alpha h(2\lambda_2 - \lambda_n\alpha h)}\Ep{C^k}\nonumber\\
        &\ \ \ + (1-p_U)\prt{2M_{n,\kappa}\Ep{\sqrt{C^k}} + M_{n,\kappa}^2}.
    \end{align}
    Since $2x\leq \eta + \frac{x^2}{\eta}$ holds for all $x\geq0$ and $\eta>0$, we obtain 
    $\sqrt{C^k} \leq \frac\eta2 + \frac{C^k}{2\eta}$ for any $\eta>0$.
    Hence, it follows that
    \begin{align*}
        \Ep{C^{k+1}}
        &\leq \prt{1 - p_U\alpha h(2\lambda_2 - \lambda_n\alpha h) + (1-p_U)\frac{M_{n,\kappa}}{\eta}}\Ep{C^k}\nonumber\\
        &\ \ \ + (1-p_U)M_{n,\kappa}(\eta+M_{n,\kappa}).
    \end{align*}
    Observe that $1 - p_U\alpha h(2\lambda_2 - \lambda_n\alpha h) + (1-p_U)\frac{M_{n,\kappa}}{\eta}<1$ for $\eta>\bar\eta$, so that solving the linear recurrence yields the conclusion.
\end{proof}

Notice that the setting of Theorem~\ref{thm:OpenSystems:EConvRate_TAC} guarantees that the contraction rate satisfies $A_\eta<1$ (which is ensured for any $\eta>\bar\eta$) for any parametrization of the system (as long as $\rho_R<\infty$, \textit{i.e.}, if updates happen).
Hence, $\Gamma_\eta$ provides an upper bound on the asymptotic expected error, and a few algebraic manipulations yield
\begin{equation}
    \label{eq:cor:OpenSystems:etabar:limGamma}
    \lim\sup_{k\to\infty}\Ep{\LdagNorm{x^{k}-\optx{k}}^2}\leq 
    \Gamma_\eta
    = \frac{\bar\eta(M_{n,\kappa}+\eta)}{1-\bar\eta/\eta}.
\end{equation}

When $\eta\to\infty$, then the contraction rate becomes minimal, \textit{i.e.}, $A_\eta \to 1-p_U\alpha h(2\lambda_2-\lambda_n\alpha h)$, and the asymptotic error diverges. 
Observe that for $\eta>\bar\eta$, then $\Gamma_\eta$ is convex, and one can determine the value of $\eta$ that minimizes the upper bound on the asymptotic expected error $\Gamma_\eta$, as presented in the following corollary.

\begin{cor}
\label{cor:OpenSystems:etastar}
    When $\eta = \eta^* = \bar\eta\prt{1+\sqrt{1+\frac{M_{n,\kappa}}{\bar\eta}}}$, the convergence of the RCD algorithm in open system is guaranteed with minimal upper bound on the asymptotic error $\Gamma_{\eta^\ast}$, and: 
    \begin{align}
        \label{eq:cor:OpenSystems:etastar:Aetastar}
        A_{\eta^*}
        &= 1 - p_U\alpha h(2\lambda_2-\lambda_n\alpha h)\frac{\sqrt{1+\frac{M_{n,\kappa}}{\bar\eta}}}{1+\sqrt{1+\frac{M_{n,\kappa}}{\bar\eta}}};\\
        \label{eq:cor:OpenSystems:etastar:Gammaetastar}
        \Gamma_{\eta^*}
        &= (\eta^\ast)^2
        = \bar\eta^2
        \prt{1+\sqrt{1+\frac{M_{n,\kappa}}{\bar\eta}}}^2,
    \end{align}
    where we recall that $\bar\eta:=\rho_R\frac{M_{n,\kappa}}{\alpha h(2\lambda_2-\lambda_n\alpha h)}$.
\end{cor}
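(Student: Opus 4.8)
The plan is to treat $\Gamma_\eta$ as a one-variable function on the interval $(\bar\eta,\infty)$ and minimize it by elementary calculus, using the compact form $\Gamma_\eta = \frac{\bar\eta(M_{n,\kappa}+\eta)\eta}{\eta-\bar\eta}$ obtained from Corollary~\ref{cor:OpenSystems:etabar} by multiplying the numerator and denominator of \eqref{eq:cor:OpenSystems:etabar:limGamma} by $\eta$. Since $\Gamma_\eta\to+\infty$ both as $\eta\downarrow\bar\eta$ and as $\eta\to\infty$, and $\Gamma_\eta$ is convex on $(\bar\eta,\infty)$ as already recorded just before the statement, any critical point in this interval is the unique global minimizer; locating it will pin down $\eta^*$, and substituting back will produce the closed forms for $\Gamma_{\eta^*}$ and $A_{\eta^*}$.

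Next I would differentiate. Writing $M:=M_{n,\kappa}$, a direct computation of the derivative of $\frac{\eta^2+M\eta}{\eta-\bar\eta}$ gives a numerator $\eta^2 - 2\bar\eta\eta - M\bar\eta$ (the $(\eta-\bar\eta)^2$ in the denominator never vanishes on the interval). Setting it to zero yields the quadratic $\eta^2 - 2\bar\eta\eta - M\bar\eta = 0$, whose roots are $\bar\eta\prt{1\pm\sqrt{1+M/\bar\eta}}$; only the $+$ root lies in $(\bar\eta,\infty)$, so $\eta^* = \bar\eta\prt{1+\sqrt{1+\tfrac{M_{n,\kappa}}{\bar\eta}}}$, which is exactly the claimed value, and it is a minimizer by the endpoint behaviour (or the noted convexity). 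In particular $\eta^*>\bar\eta$, so Corollary~\ref{cor:OpenSystems:etabar} applies and already guarantees $A_{\eta^*}<1$ and convergence to $\Gamma_{\eta^*}$.

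It then remains to substitute $\eta^*$ back. The key simplification is the critical-point identity $M\bar\eta = (\eta^*)^2 - 2\bar\eta\eta^*$, equivalently $M = \frac{\eta^*(\eta^*-2\bar\eta)}{\bar\eta}$, from which $M+\eta^* = \frac{\eta^*(\eta^*-\bar\eta)}{\bar\eta}$. Plugging this into $\Gamma_{\eta^*} = \frac{\bar\eta(M+\eta^*)\eta^*}{\eta^*-\bar\eta}$ makes the factors $\bar\eta$ and $(\eta^*-\bar\eta)$ cancel, leaving $\Gamma_{\eta^*} = (\eta^*)^2 = \bar\eta^2\prt{1+\sqrt{1+M_{n,\kappa}/\bar\eta}}^2$, which is \eqref{eq:cor:OpenSystems:etastar:Gammaetastar}. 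For $A_{\eta^*}$, I insert $\eta^*$ into \eqref{thm:OpenSystems:EConvRate:Aeta} and use $(1-p_U)M_{n,\kappa} = p_U\,\bar\eta\,\alpha h(2\lambda_2-\lambda_n\alpha h)$ — which is merely the definition of $\bar\eta$ together with $\rho_R=(1-p_U)/p_U$ — so that $(1-p_U)M_{n,\kappa}/\eta^* = p_U\alpha h(2\lambda_2-\lambda_n\alpha h)\,\bar\eta/\eta^*$; since $\bar\eta/\eta^* = 1/\prt{1+\sqrt{1+M_{n,\kappa}/\bar\eta}}$ and $1-\tfrac{1}{1+s}=\tfrac{s}{1+s}$, collecting terms yields exactly \eqref{eq:cor:OpenSystems:etastar:Aetastar}.

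The computation is entirely routine; the only places requiring a little care are recognizing that the critical-point quadratic has a single admissible root which is a minimizer rather than a maximizer of $\Gamma_\eta$ (handled via the limits at the endpoints, or the convexity already established), and the slightly non-obvious algebraic collapse $\Gamma_{\eta^*}=(\eta^*)^2$, which falls out cleanly once the critical-point relation is used to rewrite $M_{n,\kappa}+\eta^*$ in terms of $\eta^*$ and $\bar\eta$.
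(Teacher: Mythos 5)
Your proposal is correct and follows essentially the same route as the paper: differentiate $\Gamma_\eta$ on $(\bar\eta,\infty)$, obtain the quadratic $\eta^2-2\bar\eta\eta-M_{n,\kappa}\bar\eta=0$, reject the root below $\bar\eta$, and substitute back using the definition of $\bar\eta$ to get $A_{\eta^*}$. The only cosmetic difference is in establishing $\Gamma_{\eta^*}=(\eta^*)^2$ — you cancel factors via the critical-point identity for $M_{n,\kappa}+\eta^*$, while the paper notes that $\Gamma_\eta=\eta^2$ holds exactly when the same quadratic vanishes — which amounts to the same algebra.
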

\begin{proof}
    Observe that $\Gamma_\eta$ is convex for $\eta>\bar\eta$, and we have 
    $\frac{d}{d\eta}\Gamma\rvert_{\eta=\eta^*}=0$ with $\eta^* = \arg\min_{\eta>\bar\eta}\Gamma_\eta$.
    Hence we compute
    \begin{align}
    \label{eq:proof:cor:OpenSystems:etastar:dGammadeta}
        \frac{d}{d\eta}\Gamma_\eta
        &= \frac{\bar\eta}{(\eta-\bar\eta)^2}(\eta^2-2\bar\eta\eta-M_{n,\kappa}\bar\eta)=0,
    \end{align}
    which is satisfied for
    \begin{align*}
        &\eta_1^* = \bar\eta+\sqrt{\bar\eta^2+M_{n,\kappa}\bar\eta}&
        &;&
        &\eta_2^* = \bar\eta-\sqrt{\bar\eta^2+M_{n,\kappa}\bar\eta}.
    \end{align*}
    Since $\eta_2^*<\bar\eta$, it must be rejected, and it follows that 
    \begin{equation*}
        \eta^* 
        = \bar\eta+\sqrt{\bar\eta^2+M_{n,\kappa}\bar\eta}
        = \bar\eta\prt{1+\sqrt{1+\frac{M_{n,\kappa}}{\bar\eta}}}.
    \end{equation*}
    We can then compute
    \begin{align*}
        A_{\eta^*}
        &= 1 - p_U\alpha h(2\lambda_2-\lambda_n\alpha h) + \frac{(1-p_U)M_{n,\kappa}}{\bar\eta\prt{1+\sqrt{1+M_{n,\kappa}/\bar\eta}}},
    \end{align*}
    and a few algebraic manipulations yield \eqref{eq:cor:OpenSystems:etastar:Aetastar}.
    Now observe that $\Gamma_{\eta}=\eta^2$ if and only if
    \begin{align*}
        \eta^2-2\bar\eta\eta-M_{n,\kappa}\bar\eta = 0,
    \end{align*}
    which is equivalent to \eqref{eq:proof:cor:OpenSystems:etastar:dGammadeta} for $\eta>\bar\eta$, so that the solution is $\eta^\ast$. 
    Hence, $\Gamma_{\eta^\ast}=\prt{\eta^\ast}^2$, which yields \eqref{eq:cor:OpenSystems:etastar:Gammaetastar}.
\end{proof}

\begin{remark}
    The bound $\Gamma_\eta$ in \eqref{eq:cor:OpenSystems:etabar:limGamma} depends on $M_{n,\kappa}$ and $\bar\eta$, which itself depends on $M_{n,\kappa}$ and other parameters as well.
    In fact, $\bar\eta$ can be interpreted as the ratio between the spurious effect of replacements and the advantageous effect of RCD steps on the error.
    Briefly, this means that the greater the effect of replacements and the smaller that of updates (as discussed in the previous sections), then the larger the asymptotic error.
\end{remark}

Observe that Theorem~\ref{thm:OpenSystems:EConvRate_TAC} and Corollary~\ref{cor:OpenSystems:etastar} provide \emph{upper bounds} on the expected error of the algorithm, and therefore induce a certain conservatism with respect to the actual error.
This essentially follows from Proposition~\ref{prop:Repl:minChange} whose aim is to bound the additive error injected at one single replacement, whereas tighter bounds might be derived on the sum of those additive errors.
This is especially true as $\rho_R$ grows.

When no replacements happen, \textit{i.e.}, $\rho_R\to0$, then the system behaves as a closed system, and we retrieve the corresponding convergence behavior: the expected asymptotic error $\Gamma_\eta\to0$ and the contraction rate $A_\eta\to1-\alpha h(2\lambda_2-\lambda_n\alpha h)$ for all $\eta>0$, consistently with the constant of convergence of the RCD in closed system derived in \eqref{eq:prop:ClosedConv:kLRate}.
By contrast, as $\rho_R$ gets larger, \textit{i.e.}, as replacements become more frequent, then the expected asymptotic error increases, and the contraction rate $A_{\eta^*}$ gets closer to $1$ (observe that $A_{\eta^*}<1$ remains true as long as $\rho_R<\infty$).
In the particular limit case where $\rho_R\to\infty$, then the minimal upper bound on the expected asymptotic error becomes $\Gamma_{\eta^*}\to4\bar\eta\to\infty$ and $A_{\eta^*}\to1$.

Interestingly, within the allowed range of $h$, $\Gamma_{\eta^*}$ decays as the step-size $h$ increases, suggesting that choosing $h$ as large as possible leads to the smallest value for the upper bound on the expected asymptotic error $\Gamma_{\eta^*}$.
This means that the only limitation on the choice of the step-size comes from the analysis of the algorithm in closed system ($h\leq\frac{\lambda_2}{\lambda_n}\frac{2}{\alpha+\beta}$ from Proposition~\ref{prop:ClosedConv:kLRate} in our case), and that no particular precaution should be taken regarding the open character of the system.

\begin{remark}
\label{rem:OpenSystems:OtherAlgos}
    The methodology we used in this section can easily be extended to other algorithms than the RCD algorithm.
    In particular the results of Theorem~\ref{thm:OpenSystems:EConvRate_TAC} and Corollary \ref{cor:OpenSystems:etastar} can be adapted to any algorithm with linear convergence in closed systems, that is, such that 
    \begin{equation}
        \label{eq:rem:OpenSystems:OtherAlgos:ClosedConv}
        \LdagNorm{x^+-x^*}^2\leq K\LdagNorm{x-x^*}^2,
    \end{equation}
    with some positive $K<1$.
    In that case, the same constant of convergence as that presented in Theorem~\ref{thm:OpenSystems:EConvRate_TAC} is obtained with 
    \begin{align}
        \label{rem:OpenSystems:EConvRate:OtherAlgos:Aeta}
        A_\eta &:= 1-p_U(1-K)+(1-p_U)\frac{M_{n,\kappa}}{\eta};\\
        \label{rem:OpenSystems:EConvRate:OtherAlgos:Gammaeta}
        \Gamma_\eta &:= \frac{(1-p_U)M_{n,\kappa}(\eta+M_{n,\kappa})\eta}{p_U\eta(1-K)-(1-p_U)M_{n,\kappa}}.
    \end{align}
    We can show that convergence can be guaranteed in open system following a similar argument as that used to prove Theorem~\ref{thm:OpenSystems:EConvRate_TAC} if $K<1$.
    Hence, this analysis can be applied \textit{e.g.}, to the results presented in Proposition~\ref{prop:effective_resistance} or in Remark~\ref{rem:ClosedConv:AltStepSize}.
\end{remark}

To illustrate the results of Theorem~\ref{thm:OpenSystems:EConvRate_TAC}, we consider systems with piecewise quadratic local cost functions $f_i$: for $\varphi_{i1},\varphi_{i2}\in\brk{\frac{\alpha}{2},\frac{\beta}{2}}$, the cost function $f_i$ is given by
\begin{equation}\label{eq:piecewise_quadratic}
f_i(x_i) = \begin{cases}
  \varphi_{i1}(x_i-\nu_i)^2,  & \text{if }x_i< \nu_i \\
  \varphi_{i2}(x_i-\nu_i)^2, & \text{if }x_i\ge\nu_i
\end{cases},    
\end{equation}
where $\nu_i$ is the minimizer of $f_i$ satisfying Assumption~\ref{Ass:Statement:B(0,c)&f*=0}.
Such function therefore satisfies Assumption~\ref{Ass:Statement:Fab} as well.
Observe that no assumption on the way we choose the local cost function $f_i$ of a joining agent at replacements is required in the derivation of our results.
Hence, we consider two possible cases for that choice:
\emph{random}, where the parameters $\varphi_{i1}$ and $\varphi_{i2}$ are uniformly randomly chosen in $\brk{\frac{\alpha}{2},\frac{\beta}{2}}$, and \emph{adversarial}, where these parameters are arbitrarily chosen to maximize the error $\LdagNorm{x^k-\optx{k}}^2$ after the replacement among $100$ realizations of such uniform random choice.

\begin{figure}[h!]
    \centering
    \includegraphics[width=0.5\textwidth,clip = true, trim=1cm 10cm 1cm 10cm,keepaspectratio]{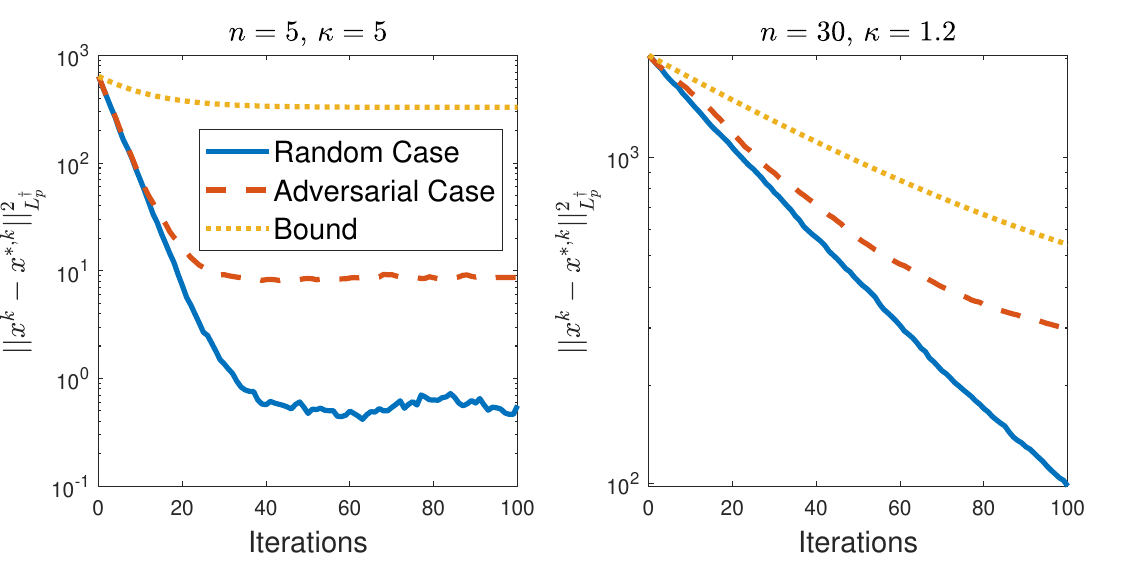}
    \caption{Performance of the RCD algorithm in a complete graph constituted of respectively $n=5$ agents with $\kappa=5$ (left) and $n=30$ agents with $\kappa=1.2$ (right), with $p_U=0.95$ and $b=1$, and where each local objective function is defined by \eqref{eq:piecewise_quadratic}.
    The plain blue and red dashed lines represent the actual performance of the algorithm averaged over 500 realizations of the process, respectively for the random and adversarial replacements cases.
    The yellow dotted line is the upper bound \eqref{thm:OpenSystems:EConvRate} obtained from Corollary~\ref{cor:OpenSystems:etastar}.}
    \label{fig:Complete:n5Kappa5_n30Kappa1.2}
\end{figure}

In Fig.~\ref{fig:Complete:n5Kappa5_n30Kappa1.2}, we show the evolution of the expected error $\Ep{\LdagNorm{x^k-\optx{k}}^2}$ simulated for a network with interconnections defined by a complete graph, homogeneous agents and uniform probabilities. We consider two parametrizations of $\kappa$ and $n$ in both random and adversarial replacement cases, and we compare the simulations with \eqref{thm:OpenSystems:EConvRate} using the values given by Corollary~\ref{cor:OpenSystems:etastar}.
The figure shows that convergence is indeed guaranteed for the RCD in the presented settings and that the result of Corollary~\ref{cor:OpenSystems:etastar} shows some conservatism, which is inherited from Proposition~\ref{prop:Repl:minChange}.
It is interesting to point out that these settings respectively make use of $\bar M_{n,\kappa}^2=\theta_{n,\kappa}$ for $n=30$, $\kappa=1.2$, and of $\bar M_{n,\kappa}^2=\psi_{n,\kappa}$ for $n=5$, $\kappa=5$, consistently with the description of $\bar M^2_{n,\kappa}$ in the homogeneous case of Section~\ref{sec:Replacements}. 
This highlights the impact of those parameters in the tightness of the bound $\bar M^2_{n,\kappa}$ used to derive our main results.

\begin{figure}[h!]
    \centering
    \includegraphics[width=0.5\textwidth,clip = true, trim=1cm 10cm 1cm 10cm,keepaspectratio]{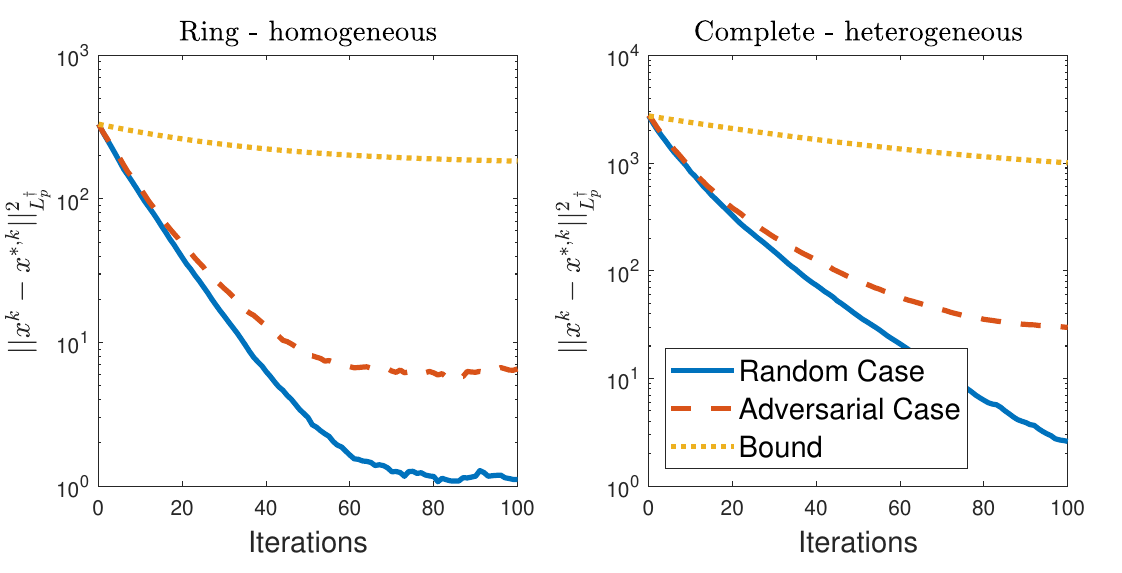}
    \caption{Performance of the RCD algorithm with $n=5$ agents, $\kappa=1.2$, $p_U=0.95$ and $b=1$, respectively in (left) a ring graph with homogeneous agents (\textit{i.e.}, $a_i=1$ for all $i$) and (right) a complete graph with heterogeneous agents (\textit{i.e.}, $a_1=10$, $a_i=1$ for $i>1$), and where each local objective function is defined by \eqref{eq:piecewise_quadratic}. 
    The plain blue and red dashed lines represent the actual performance of the algorithm averaged over 500 realizations of the process, respectively for the random and adversarial replacements cases.
    The yellow dotted line is the upper bound \eqref{thm:OpenSystems:EConvRate} obtained from Corollary~\ref{cor:OpenSystems:etastar}.}
    \label{fig:HoRing&HeComplete:n5Kappa1.2}
\end{figure}

In Fig.~\ref{fig:HoRing&HeComplete:n5Kappa1.2}, we compare the simulated performance in both replacement cases with the upper bound from Corollary~\ref{cor:OpenSystems:etastar} for a ring graph with homogeneous agents and a complete graph with heterogeneous agents.
By contrast with the previous illustrations, the ring graph setting implies a different, sparse, topology which thus reduces the range of validity for the step-size $h$ due to the small value of $\lambda_2$ that does not scale with $n$.
Similarly, the heterogeneous setting impacts $\lambda_2$ and consequently reduces the range of $h$, due to the imbalance in $\Lpbf$.
Those moreover affect the behavior of the norm $\LdagNorm{\cdot}^2$.
Furthermore, the heterogeneous setting influences $\bar M_{n,\kappa}$ as well, and hence increases the effect of replacements on the bounds.
Nevertheless, even though they differ quantitatively, the results of Fig.~\ref{fig:HoRing&HeComplete:n5Kappa1.2} are qualitatively similar to the case of the complete graph with homogeneous agents presented in Fig.~\ref{fig:Complete:n5Kappa5_n30Kappa1.2}.

\section{Conclusion}
\label{Sec:conclusions}
We have studied the behavior of the distance to the minimizer for the resource allocation problem and proved linear convergence of the Random Coordinate Descent algorithm in an appropriate norm for the closed system. 
We analyzed the algorithm for a general graph topology and possible heterogeneous agents in an open multi-agent systems scenario when agents can be replaced during the iterations. 
Under replacement events, we proved that for an appropriate step-size, the algorithm cannot converge to the instantaneous minimizer due to the perturbations generated by the replacements but is stable. 
We derived an upper bound  for the error in expectation which depends on the variation of the minimizer due to replacements and the frequency of these events.

A natural continuation of this work would be to handle the case where the budget and weights in the constraint can vary in time. 
Also, it would be interesting to consider more general equality constraints between the states of the agents and block updates at each iteration \cite{richtarik2014iteration}, such that the optimization is performed along more than one edge. Finally, a possible varying size of the system is an interesting direction for future work, where agents could join and leave the network independently of the current state of the system.
This extension would however introduce a significant amount of new challenges, especially regarding the impact of (dis)connections of agents in terms of both graph properties and vector dimensions in our analysis.

\bibliographystyle{IEEEtran}
\bibliography{TAC_Open_RCD.bib}

\appendix


\subsection{Proof of Lemma~\ref{lem:Repl:minLoc:Global}}\label{appendix:minLocLemmas1}
Let us consider some $x\in\Sab$ such that $x\notin B(0,R_{b,\kappa})$, and let $\bar x^* = \argmin_{x\in\R^{nd}}f(x)$ denote the global minimizer without constraint.
    We have $\norm{x}>R_{b,\kappa}$ by definition and $\norm{\bar x^*}\leq \sqrt{n}c$ since $\bar x^* \in B(\mathbf{0}_d,c)^n$, and it thus follows that $\norm{x-\bar x^*}>R_{b,\kappa}-\sqrt{n}c$.
    Hence, since $f$ is $\alpha$-strongly convex, and since  
    $f(\bar x^*)=0$ from Assumption~\ref{Ass:Statement:B(0,c)&f*=0}, we have
    \begin{equation*}
        f(x)\! 
        \geq\! \frac\alpha2\norm{x-\bar x^*}^2
        >\! \frac\alpha2\kappa\!\prt{\!\!\sqrt{n}c+\frac{\norm{b}}{\norm{a}}}^2\!\!
        =\! \frac{\beta }{2}\!\prt{\!\!\sqrt{n}c+\frac{\norm{b}}{\norm{a}}}^2.
    \end{equation*}
    Now let us define $x_b := \frac{1}{\norm{a}^2}(a\otimes I_d)b$, which is a feasible point with norm $\norm{x_b}=\frac{\norm{b}}{\norm{a}}$.
    Since $f$ is $\beta$-smooth, and since $f(\bar x^*)=0$ from Assumption~\ref{Ass:Statement:B(0,c)&f*=0},
    we get
    \begin{align*}
        f(x_b) \leq \frac{\beta}{2}\norm{x_b-\bar x^*}^2 
        \leq \frac{\beta }{2} \prt{\sqrt{n}c+\frac{\norm{b}}{\norm{a}}}^2.
    \end{align*}
    Finally, since $x_b\in \Sab$, we have $f(x^*)\leq f(x_b)$ by definition.
    Combining all the inequalities above then yields
    \begin{align*}
        f(x) > \frac{\beta }{2}\prt{\sqrt{n}c+\frac{\norm{b}}{\norm{a}}}^2 
        \geq f(x_b) 
        \geq f(x^*),
    \end{align*}
    which implies that $x$ cannot be the minimizer of the problem and concludes the proof.
    
\subsection{Proof of Lemma~ \ref{lem:Repl:minLoc:Local}}
\label{appendix:minLocLemmas2}
    Let us denote $\bar x_i^*$ the minimizer of $f_i$ without constraint which satisfies $f_i(\bar x_i^*)=0$ and $\nabla f_i(\bar x_i^*)=\mathbf 0_d$. From $\beta$-smoothness of the local functions we have \cite[Thm. 2.1.5]{nesterov2018lectures}:
    $$
    \norm{\nabla f_i(x_i^*)}^2\le\beta\inProd{\nabla f_i(x_i^*)}{x_i^*-\bar x_i^*}.
    $$
    Then, from the optimality condition \eqref{eq:Statement:Optimality}: 
    $$
    a_i^2\norm{\lambda^*}^2\le\beta\inProd{\lambda^*}{a_i(x_i^*-\bar x_i^*)}.
    $$
    By summing over all the $i$, we obtain:
    $$
    \norm{a}^2\norm{\lambda^*}^2\le \beta\inProd{\lambda^*}{\sum_{i=1}^n a_i\prt{x_i^*-\bar x_i^*}}.
    $$
    We use the Cauchy-Schwarz inequality to get:
    $$
    \norm{a}^2\norm{\lambda^*}^2\le \beta\norm{\lambda^*}\norm{\sum_{i=1}^n a_i\prt{x_i^*-\bar x_i^*}},
    $$
    and by using the triangle inequality and the fact that $\sum_{i=1}^n a_ix_i^*=b$ we obtain:
    \begin{align}
    \norm{\lambda^*}&\le\frac{\beta}{\norm{a}^2}\prt{\norm{b}+\norm{\sum_{i=1}^na_i\bar x_i^*}}\nonumber\\
    &\le\frac{\beta}{\norm{a}^2}\prt{\norm{b}+c\norm{a}_1}\nonumber,  
    \end{align}
    which corresponds to \eqref{eq:equation2_proof_bound_minimizer}.
    From $\alpha$-strongly convexity of the local functions we have:
    $$
    \alpha \norm{x_i-\bar x_i^*}\le \norm{\nabla f(x_i)}.
    $$
    By using the reverse triangle inequality and the optimality condition we get
    \begin{equation}
    \label{eq:equation1_proof_bound_minimizer}
        \norm{x_i^*}\le\frac{1}{\alpha}\norm{a_i\lambda^*}+\norm{\bar x_i^*}\le \frac{a_i}{\alpha}\norm{\lambda^*}+c.    
    \end{equation}
    Equation \eqref{eq:lem:Repl:minLoc:Local:nolambda} then follows from combining \eqref{eq:equation2_proof_bound_minimizer} and \eqref{eq:equation1_proof_bound_minimizer}.

\subsection{Proof of Proposition~\ref{prop:Repl:minChange}}
\label{sec:Appendix:MinChange:Proof}
\paragraph{Proof that $\psi_{n,\kappa}$ is an upper bound}
Observe that $x^{(1)},x^{(2)}\in B(\mathbf 0_{nd},R_{b,\kappa})$ from Lemma~\ref{lem:Repl:minLoc:Global}, so that: 
\begin{align*}
    \norm{x^{(1)}-x^{(2)}}^2 
    \leq 2\prt{\norm{x^{(1)}}^2+\norm{x^{(2)}}^2} 
    \leq 4R_{b,\kappa}^2,
\end{align*}
which yields that $\psi_{n,\kappa}$ is an upper bound.
\paragraph{Proof that $\chi_{n,\kappa}$ is an upper bound} We remind the reader that for $i=1,\ldots,n-1$, we have
$\nabla f_i(x_i^{(q)}) = a_i \lambda^{(q)}$, with $q=1,2$.
From $\alpha$-strongly convexity of the local functions, it follows
that for all $i=1,\ldots,n-1$:
$$
a_i\inProd{\lambda^{(1)}-\lambda^{(2)}}{x_i^{(1)}-x_i^{(2)}} \geq  \alpha\norm{x_i^{(1)}-x_i^{(2)}}^2.
$$
Let us define $y^{(q)}\in\R^{d(n-1)}$ the vector such that 
$y_i^{(q)} = x_i^{(q)}$
for $q=1,2$ and for $i=1,\ldots,n-1$.
Using the fact that $\sum_{i=1}^n a_ix_i^{(q)} = b$ for $q=1,2$ and summing up the above inequalities over all $i=1,\ldots,n-1$ yields
$$
a_n\inProd{\lambda^{(1)}-\lambda^{(2)}}{x_n^{(2)}-x_n^{(1)}} \geq   \alpha\norm{y^{(1)}-y^{(2)}}^2,
$$
where we used the fact that $\sum_{i=1}^{n-1}a_ix_i^{(q)}+a_nx_n^{(q)}=b$.
By using Cauchy-Schwarz inequality and triangle inequality we obtain
\begin{align}
    \norm{y^{(1)}-y^{(2)}}^2&\le \frac{a_n}{\alpha}\prt{\norm{\lambda^{(1)}}+\norm{\lambda^{(2)}}}\prt{\norm{x_n^{(1)}}+\norm{x_n^{(2)}}}\nonumber
\end{align}
Then, we use \eqref{eq:equation2_proof_bound_minimizer} and \eqref{eq:lem:Repl:minLoc:Local:nolambda} to get
\begin{equation}\label{eq:chi_eq1}     
    \norm{y^{(1)}-y^{(2)}}^2\le 4\prt{\frac{a_n}{\norm{a}^2}\kappa\prt{\norm{b}+c\norm{a}_1}+c}.
\end{equation}
By definition we have
\begin{equation}\label{eq:norm_x1_x2}
    \norm{x^{(1)}-x^{(2)}}^2=\norm{y^{(1)}-y^{(2)}}^2+\norm{x_n^{(1)}-x_n^{(2)}}^2.
\end{equation}
We apply triangle inequality and \eqref{eq:lem:Repl:minLoc:Local:nolambda} to obtain
\begin{align}
    \norm{x^{(1)}-x^{(2)}}^2&\le\norm{y^{(1)}-y^{(2)}}^2+\prt{\norm{x_n^{(1)}}+\norm{x_n^{(2)}}}^2\nonumber\\
    &\le\norm{y^{(1)}-y^{(2)}}^2+2\prt{\norm{x_n^{(1)}}^2+\norm{x_n^{(2)}}^2}\nonumber\\
    &\le \norm{y^{(1)}-y^{(2)}}^2+4\prt{\frac{a_n}{\norm{a}^2}\kappa\prt{\norm{b}+c\norm{a}_1}+c}\label{eq:chi_eq2}
\end{align}
Finally, the result \eqref{eq:prop:Repl:minChange:chi} yields by combining \eqref{eq:chi_eq1} and \eqref{eq:chi_eq2} and using the fact that $a_n\le a_+$.
\paragraph{Proof that $\theta_{n,\kappa}$ is an upper bound}
Since the functions are $\alpha$-strongly convex and $\beta$-smooth, we have
that for all $i=1,\ldots,n-1$:
\begin{multline*}
    a_i(1+\kappa^{-1})\inProd{\lambda^{(1)}-\lambda^{(2)}}{x_i^{(1)}-x_i^{(2)}} \geq \\ \beta^{-1}a_i^2\norm{\lambda^{(1)}-\lambda^{(2)}}^2 + \alpha\norm{x_i^{(1)}-x_i^{(2)}}^2.
\end{multline*}
By summing up the above inequalities over all $i=1,\ldots,n-1$ yields
\begin{multline*}
    a_n(1+\kappa^{-1})\inProd{\lambda^{(1)}-\lambda^{(2)}}{x_n^{(2)}-x_n^{(1)}} \geq \\ m\beta^{-1}\norm{\lambda^{(1)}-\lambda^{(2)}}^2 + \alpha\norm{y^{(1)}-y^{(2)}}^2,
\end{multline*}
where $m=\sum_{i=1}^{n-1} a_i^2$.
By using Cauchy-Schwarz inequality we obtain:
\begin{multline*}
    a_n(1+\kappa^{-1})\norm{\lambda^{(1)}-\lambda^{(2)}} \norm{x_n^{(2)}-x_n^{(1)}} \geq \\ m\beta^{-1}\norm{\lambda^{(1)}-\lambda^{(2)}}^2 + \alpha\norm{y^{(1)}-y^{(2)}}^2.
\end{multline*}
This can be written as follows
\begin{equation}
\label{eq:proof:prop:Repl:monChange:equation_proof_difference_minimizers}
    \alpha\norm{y^{(1)}-y^{(2)}}^2 \leq \phi(\norm{\lambda^{(1)}-\lambda^{(2)}}),
\end{equation}
where 
\begin{equation*}
    \phi(z) = -m\beta^{-1}z^2 + a_n(1+\kappa^{-1})\norm{x_n^{(2)}-x_n^{(1)}}z.
\end{equation*}
Since $\phi$ is a concave parabola, there exists $\phi^* = \max_z\phi(z)<\infty$ such that $\phi(z)\leq \phi^*$ for all $z$ given by
\begin{equation*}
    \phi^\ast 
    = \frac{a_n^2(1+\kappa^{-1})^2\norm{x_n^{(2)}-x_n^{(1)}}^2}{4m\beta^{-1}},
\end{equation*}
and it follows by using \eqref{eq:proof:prop:Repl:monChange:equation_proof_difference_minimizers} that
\begin{align*}
    \norm{y^{(1)}-y^{(2)}}^2
    &\leq \frac{a_n^2(1+\kappa^{-1})^2}{\alpha\beta^{-1}} \frac{\norm{x_n^{(1)}-x_n^{(2)}}^2}{4m}\\
    &= \prt{\sqrt\kappa + \frac{1}{\sqrt \kappa}}^2 \frac{a_{+}^2\norm{x_n^{(1)}-x_n^{(2)}}^2}{4\prt{\norm{a}^2-a_+^2}}.
\end{align*}
Equation \eqref{eq:prop:Repl:minChange:theta} then follows from \eqref{eq:norm_x1_x2} using the fact that $x_n^{(1)}$ and $x_n^{(2)}$ are upper bounded by \eqref{eq:lem:Repl:minLoc:Local:nolambda}, which thus yields $\theta_{n,\kappa}$, and the conclusion follows.

\begin{IEEEbiography}[{\includegraphics[width=1in,height=1.25in,clip,keepaspectratio]{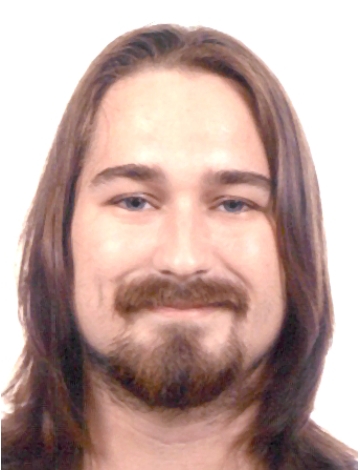}}]{Charles Monnoyer de Galland}
is a postdoctoral researcher at UCLouvain in the ICTEAM Institute.

He obtained an engineering degree in applied mathematics (2018), and the PhD degree in mathematical engineering (2022) as a FRIA fellow (F.R.S.-FNRS) from the same university.
His research interests are centered around the analysis of open multi-agent systems and decentralized optimization.
He was the recipient of the Networks and Communication Systems TC Outstanding Student Paper Prize of the IEEE Control Systems Society in 2022.
\end{IEEEbiography}

\begin{IEEEbiography}[{\includegraphics[width=1in,height=1.25in,clip,keepaspectratio]{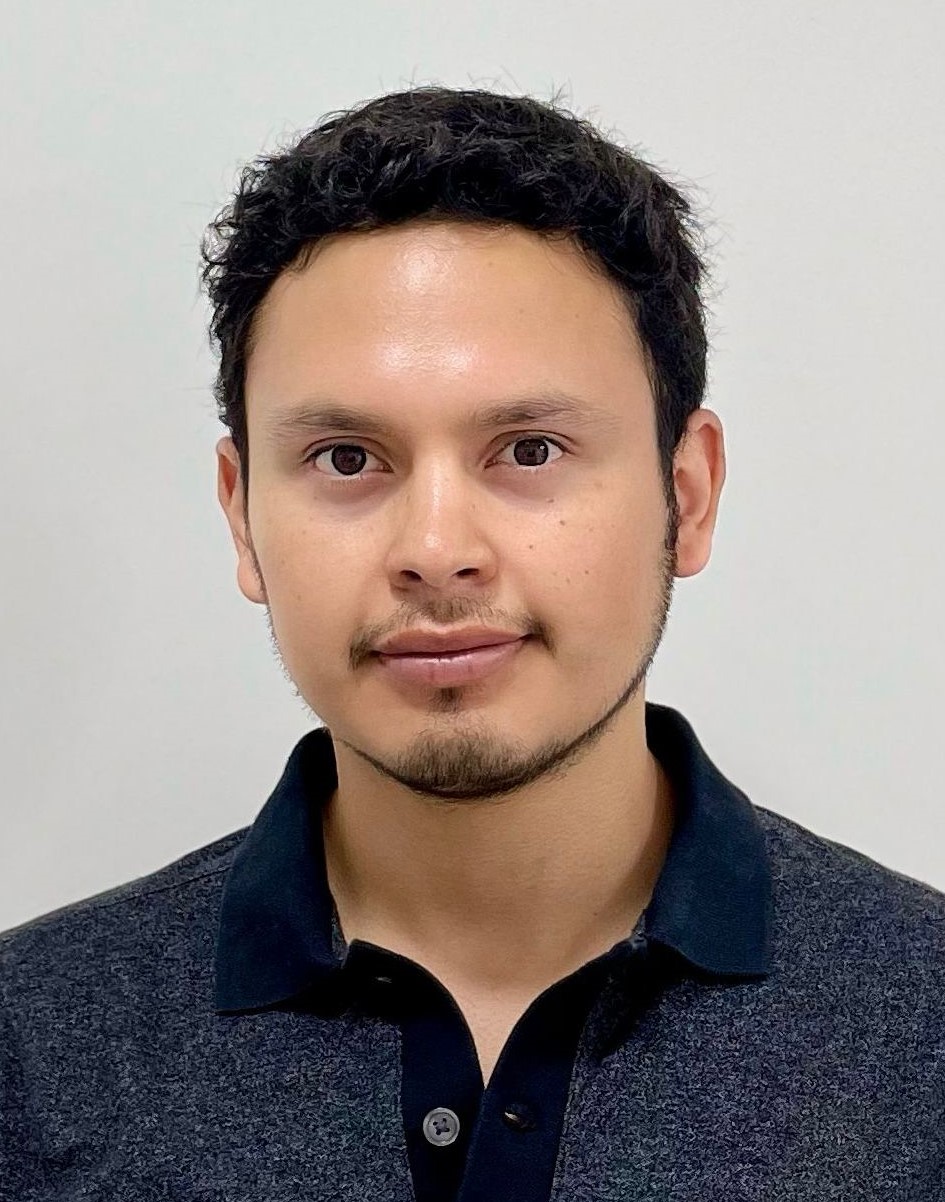}}]{Renato Vizuete}
received the M.S. degree (très bien) in Systems, Control, and Information Technologies from Université Grenoble Alpes, France (2019), and the PhD degree in Automatic Control from Université Paris-Saclay, France (2022). He is currently a postdoctoral researcher at UCLouvain, Belgium, in the ICTEAM Institute, holding a postdoctoral scholarship from the same university. His research interests include  multi-agent systems, distributed optimization, control theory and networked control systems. He was the recipient of the Networks and Communication Systems TC Outstanding Student Paper Prize of the IEEE Control Systems Society in 2022, and the Second Thesis Prize in the category Impact Science of the Fondation CentraleSupélec in 2023.
\end{IEEEbiography}

\begin{IEEEbiography}[{\includegraphics[width=1in,height=1.25in,clip,keepaspectratio]{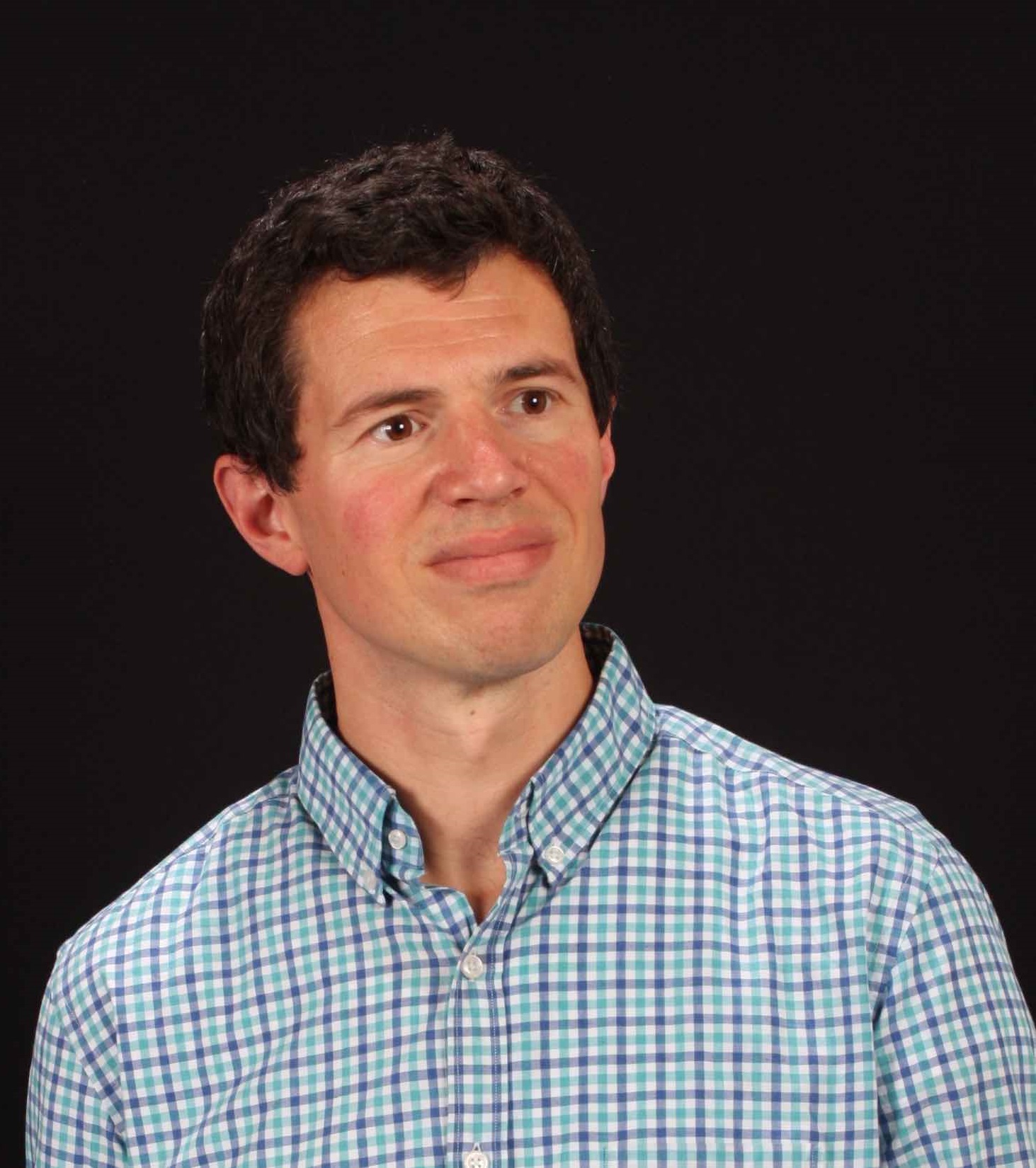}}]{Julien M. Hendrickx}
is professor of mathematical engineering at UCLouvain, in the Ecole Polytechnique de Louvain since 2010.

He obtained an engineering degree in applied mathematics (2004) and a PhD in mathematical engineering (2008) from the same university. He has been a visiting researcher at the University of Illinois at Urbana Champaign in 2003-2004, at the National ICT Australia in 2005 and 2006, and at the Massachusetts Institute of Technology in 2006 and 2008. He was a postdoctoral fellow at the Laboratory for Information and Decision Systems of the Massachusetts Institute of Technology 2009 and 2010, holding postdoctoral fellowships of the F.R.S.-FNRS (Fund for Scientific Research) and of Belgian American Education Foundation. He was also resident scholar at the Center for Information and Systems Engineering (Boston University) in 2018-2019, holding a WBI.World excellence fellowship.

Doctor Hendrickx is the recipient of the 2008 EECI award for the best PhD thesis in Europe in the field of Embedded and Networked Control, and of the Alcatel-Lucent-Bell 2009 award for a PhD thesis on original new concepts or application in the domain of information or communication technologies. 
\end{IEEEbiography}

\begin{IEEEbiography}[{\includegraphics[width=1in,height=1.25in,clip,keepaspectratio]{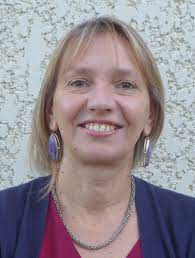}}]{Elena Panteley}
received the M.Sc. and Ph.D. degrees in applied mathematics from the State University of St. Petersburg, St. Petersburg, Russia, in 1986 and 1997, respectively. From 1986 to 1998, she held a research position with the Institute for Problem of Mechanical Engineering, Russian Academy of Science, St. Petersburg. Since 2004 she holds a tenure position as Senior Researcher of the French National Centre of Scientific Research (CNRS), at the Laboratoire de signaux et systèmes, France. She is also associate researcher of ITMO University, St Petersbourg Russia, since 2014. Her research interests include stability and control of nonlinear dynamical systems, network systems with applications to electromechanical and neuronal systems.
\end{IEEEbiography}

\begin{IEEEbiography}[{\includegraphics[width=1in,height=1.25in,clip,keepaspectratio]{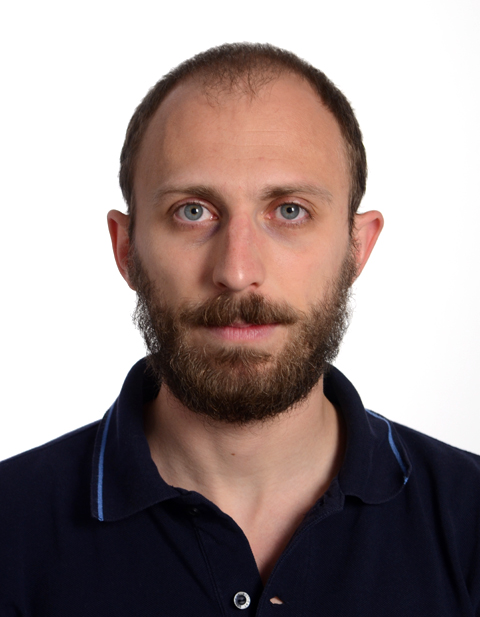}}]{Paolo Frasca}
(M'13, SM'18) received the Ph.D.\ degree in Mathematics for Engineering Sciences from Politecnico di Torino, Torino, Italy, in 2009. 
From 2013 to 2016, he was an Assistant Professor at the University of Twente in Enschede, the Netherlands. Since October 2016 he is a CNRS Researcher affiliated with GIPSA-lab, Grenoble, France. 
His research interests are in the theory of networks and control systems, with main applications to transportation and social networks. \end{IEEEbiography}

\end{document}